\begin{document}
\renewcommand\arraystretch{1.2}
\title{Characterizing Triviality of the Exponent Lattice of A Polynomial through Galois and Galois-Like Groups\thanks{This work is supported partly by NSFC under grants 61732001 and 61532019. }}
\titlerunning{Lattice Triviality through Groups}
%
\author{Tao Zheng}
%
\authorrunning{T. Zheng}
%
\institute{School of Mathematical Sciences, Peking University\\Beijing, China
\\
\email{xd07121019@126.com}}
\maketitle 
%
\begin{abstract}
The problem of computing \emph{the exponent lattice} which consists of all the multiplicative relations between the roots of a univariate polynomial has drawn much attention in the field of computer algebra. As is known, almost all irreducible polynomials with integer coefficients have only trivial exponent lattices. However, the algorithms in the literature have difficulty in proving such triviality for a generic polynomial. In this paper, the relations between the Galois group (respectively, \emph{the Galois-like groups}) and the triviality of the exponent lattice of a polynomial are investigated. The $\bbbq$\emph{-trivial} pairs, which are at the heart of the relations between the Galois group and the triviality of the exponent lattice of a polynomial, are characterized. An effective algorithm is developed to recognize these pairs. Based on this, a new algorithm is designed to prove the  triviality of the exponent lattice of a generic irreducible polynomial, which considerably improves a state-of-the-art algorithm of the same type when the polynomial degree becomes larger. In addition, the concept of the Galois-like groups of a polynomial is introduced. Some properties of the Galois-like groups are proved and, more importantly, a sufficient and necessary condition is given for a polynomial (which is not necessarily irreducible) to have trivial exponent lattice.

\keywords{polynomial root, multiplicative relation, exponent lattice, trivial, Galois group, Galois-like.}
\end{abstract}
\section{Introduction}

Set $\overline{\bbbq}^*$ to be the set of nonzero algebraic numbers. Suppose that $n\in\bbbz_{>0}$. For any $v\in(\overline{\bbbq}^*)^n$, define \emph{the exponent lattice} of $v$ to be
$
\mathcal{R}_v=\{u\in\bbbz^n\;|\;v^u=1\},
$
where $v^u=\prod_{i=1}^nv(i)^{u(i)}$ with $v(i)$ the $i$-th coordinate of $v$ and $u(i)$ the one of $u$. For a univariate polynomial $f\in\bbbq[x]$ (with $f(0)\ne 0$) of degree $n$, denote by $\vec\Omega\in(\overline{\bbbq}^*)^n$ the vector formed by listing all the complex roots of $f$ with multiplicity in some order. For convenience, we call $\mathcal{R}_{\vec\Omega}$ \emph{the exponent lattice} of the polynomial $f$ and use the notation $\mathcal{R}_f$ instead of $\mathcal{R}_{\vec\Omega}$, if no confusion is caused. Moreover, we define $\mathcal{R}_f^\bbbq=\{u\in\bbbz^n\;|\;\vec\Omega^u\in\bbbq\}$.

The exponent lattice has been studied extensively from the perspective of   number theory and algorithmic mathematics since the year 1997 (see \cite{ge1993,kauers2005,loxton1983multiplicative,masser1988linear} and \cite{matveev1994linear,pappalardi2018multiplicatively,van1977multiplicative,zheng2019computing,issac}). There are applications of the exponent lattice to many other areas or problems  concerning, for example, linear recurrence sequences, loop invariants, algebraic groups, compatible rational functions and difference equations (see \cite{almagor2018effective,chen2011structure,derksen2005quantum,kauers2005,lvov2010polynomial,structure}).  Many of the applications involve computing the exponent lattice of a polynomial in $\bbbq[x]$. A lattice $\mathcal{R}\subset\bbbz^n$ or a linear subspace $\mathcal{R}\subset\bbbq^n$ is called \emph{trivial} if any $v\in\mathcal{R}$ satisfies $v(1)=\cdots=v(n)$. It is proved in \cite{drmota1995relations} that almost all irreducible monic polynomials in $\bbbz[x]$ have trivial exponent lattices. However, the state-of-the-art algorithms ({\tt FindRelations} in \cite{ge1993,kauers2005} and {\tt GetBasis} in \cite{issac}), which compute the lattice $\mathcal{R}_v$ for a general $v\in(\overline{\bbbq}^*)^n$, have difficulty in proving the triviality of $\mathcal{R}_f$ for an irreducible monic polynomials $f$ in $\bbbz[x]$. Recently, an algorithm called {\tt FastBasis} is introduced in \cite{zheng2019computing} to efficiently prove the triviality of the exponent lattice of a given \emph{generic} polynomial. 

In Section \ref{2}, the relations between the Galois group and the  triviality of the exponent lattice of an irreducible polynomial is studied. By characterizing the so called $\bbbq$-\emph{trivial} pairs from varies points of view (Proposition \ref{trivial}, \ref{equi} and \ref{gt}), we design an algorithm (Algorithm \ref{IsQtrivial}) recognizing all those $\bbbq$-trivial pairs derived from transitive Galois groups. Base on this, an algorithm called {\tt FastBasis}$_+$ is obtain by  adjusting the algorithm {\tt FastBasis} in \cite{zheng2019computing}. It turns out that {\tt FastBasis}$_+$ is much more efficient than {\tt FastBasis} in proving the triviality of the exponent lattice of a generic irreducible polynomial when the degree of the polynomial is large.

In Section \ref{3}, we define the Galois-like groups of a polynomial since the Galois group of a polynomial does not contain enough information to decide whether the exponent lattice is triviality or not (Example \ref{not-enough}). We prove that a Galois-like group of a polynomial is a subgroup of the automorphism group of the multiplicative group generated by the polynomial roots (Proposition \ref{auto}). Furthermore, almost all conditions on the Galois group assuring the triviality of the exponent lattice can be generalized to correspondent ones on the Galois-like groups (see \S\,\ref{3.2}). More importantly, a sufficient and necessary condition is given for a polynomial (not necessarily irreducible) to have trivial exponent lattice through the Galois-like groups (see Theorem \ref{rftri} and \ref{rfqtri}).
\section{Lattice Triviality Through Galois Groups}\label{2}
\subsection{$\bbbq$-Triviality Implying Lattice Triviality}
Set $G$ to be a finite group, $H$ a subgroup of $G$. Set $\overline g=gH$ for any $g\in G$, then $G$ can be regarded as a permutation group on the set of the left co-sets $G/H=\{\overline{g}\;|\;g\in G\}$ via acting $s\overline{g}=\overline{sg}$. The pair $(G,H)$ is called \emph{faithful, primitive, imprimitive, doubly transitive, doubly homogeneous, etc.}, when the permutation representation of $G$ on $G/H$ has the respective property. The group algebra $\bbbq[G]=\{\sum_{s\in G}a_ss\;|\;s\in G,a_s\in\bbbq\}$ is defined as usual and the $\bbbq$-vector space $\bbbq[G/H]=\big\{\sum_{\overline s\in G/H}a_{\overline s}\overline s\,|\,s \in G,a_{\overline s}\in\bbbq\big\}$ becomes a left $\bbbq[G]$-module via acting $\lambda\overline{t}=\sum_{s\in G}a_s\overline{st}$, with $\lambda=\sum_{s\in G}a_ss\in\bbbq[G]$ and $\overline{t}\in G/H$. We set $\bbbz[G/H]=\big\{\sum_{\overline s\in G/H}a_{\overline s}\overline s\,|\,s \in G,a_{\overline s}\in\bbbz\big\}$ for convenience.

A subset $M$ of $\bbbq[G/H]$ is called $\bbbq$-\emph{admissible} if there is an element $\mu\in\bbbq[G] $ with stabilizer $G_\mu=H$ so that $m\mu=0$ for any $m\in M$ (Definition 3 of \cite{girstmair1999linear}).  Set $\mathcal{V}_1=\big\{a\sum_{\overline{s}\in G/H}\overline{s}\;|\;a\in\bbbq\big\}$. Then the pair $(G,H)$ is called $\bbbq$-\emph{trivial} if $0$ and $\mathcal{V}_1$ are the only two $\bbbq[G]$-submodules that are $\bbbq$-admissible (Definition 7 of \cite{girstmair1999linear}). A polynomial $f\in\bbbq[x]$ ($f(0)\neq0$) without multiple roots is called \emph{non-degenerate} if the quotient of any two roots of $f$ is not a root of unity, and \emph{degenerate} otherwise. The relations between the $\bbbq$-triviality of a pair and the triviality of an exponent lattice is given below:
\begin{proposition}\label{trivial}
Suppose that $L$ is a finite Galois extension of the field $\bbbq$ with Galois group $G$, and that $H<G$ is a subgroup of $G$ so that $G$ operates on the set $G/H$ faithfully. Then the pair $(G,H)$ is $\bbbq$-trivial iff for any $f\in\bbbq[x]$ $(f(0)\neq0)$ satisfying all the following conditions, $\mathcal{R}_f$ is trivial:

 (i) $f$ is irreducible over $\bbbq$ and non-degenerate; 
 
 (ii) the splitting field of $f$ equals $L$ and its Galois group $G_f=G$; 
 
 (iii)  $H$ is the stabilizer of a root of $f$.
\end{proposition}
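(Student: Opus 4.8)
The plan is to push the whole equivalence through one $\bbbq[G]$-submodule of $\bbbq[G/H]$. Fix an $f$ as in (i)--(iii) and a root $\omega$ of $f$ whose stabilizer in $G=G_f$ is $H$; then the roots of $f$ are $\{\,s(\omega):\overline s\in G/H\,\}$, so $\vec\Omega$, $\mathcal R_f$ and $\mathcal R_f^\bbbq$ all lie in $\bbbz[G/H]$, and $\mathcal R_f^\bbbq$ (hence its $\bbbq$-span $M_f\subseteq\bbbq[G/H]$) is a $\bbbq[G]$-submodule, since applying $t\in G$ to a rational value $\vec\Omega^u$ again gives a rational value. As the product of all roots of $f$ is rational, $\mathbf 1:=\sum_{\overline s\in G/H}\overline s\in\mathcal R_f^\bbbq$, so $\mathcal V_1\subseteq M_f$; and since $G$ is transitive on $G/H$ (whence $\bbbz[G/H]^G=\bbbz\mathbf 1$), a short augmentation argument shows $\mathcal R_f$ is trivial $\iff\mathcal R_f^\bbbq\subseteq\bbbz\mathbf 1\iff M_f=\mathcal V_1$: if $u\in\mathcal R_f^\bbbq\setminus\bbbz\mathbf 1$ then $(s-1)u\in\mathcal R_f$ is nonzero for some $s\in G$ and has vanishing coordinate sum, hence is a nontrivial relation. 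Thus the proposition reduces to: $(G,H)$ is $\bbbq$-trivial $\iff M_f=\mathcal V_1$ for every $f$ satisfying (i)--(iii).

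For the ``$\bbbq$-trivial $\Rightarrow$ triviality'' direction, the crux is the lemma that \emph{$M_f$ is $\bbbq$-admissible}. Identify $\bbbq[G/H]$ with the left ideal $\bbbq[G]e_H$, $e_H=\frac{1}{|H|}\sum_{h\in H}h$; since $\bbbq[G]$ is semisimple, $M_f=\ker(x\mapsto x\mu_0)$ for an idempotent $\mu_0\in e_H\bbbq[G]e_H$, so $m\mu_0=0$ for all $m\in M_f$, and it remains to check $G_{\mu_0}=H$. The inclusion $H\le G_{\mu_0}$ is clear; if some $g\in G_{\mu_0}\setminus H$ existed, then $\overline g-\overline e\in M_f$, so a power of $g(\omega)/\omega$ would be rational, and together with the telescoping identity $\prod_{0\le i<\mathrm{ord}(g)}g^i\!\left(g(\omega)/\omega\right)=1$ this forces $g(\omega)/\omega$ to be a root of unity; as $g(\omega)\ne\omega$, that contradicts non-degeneracy of $f$. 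Granting the lemma, $\bbbq$-triviality of $(G,H)$ forces $M_f\in\{0,\mathcal V_1\}$, and since $\mathcal V_1\subseteq M_f$ we get $M_f=\mathcal V_1$, i.e.\ $\mathcal R_f$ is trivial. This is the only place condition (i)'s non-degeneracy is needed.

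For the converse I argue by contraposition. If $(G,H)$ is not $\bbbq$-trivial, choose a $\bbbq$-admissible submodule $M\notin\{0,\mathcal V_1\}$; replacing $M$ by $M+\mathcal V_1$ (still admissible, after modifying the witness by a scalar multiple of $\mathbf 1$) we may assume $\mathcal V_1\subsetneq M$ with an admissibility witness $\mu$, $G_\mu=H$. It then suffices to construct an $f$ as in (i)--(iii) with $M\subseteq M_f$, since then $M_f\supsetneq\mathcal V_1$ and $\mathcal R_f$ is nontrivial. I would build the root $\omega$ so that the $\bbbz[G]$-submodule it generates in $L^*$ modulo $\bbbq^*$ is modelled on $\bbbq[G]\mu$: embed the summand $\bbbq[G]\mu$ of $\bbbq[G]$ into $(L^*/\bbbq^*)\otimes\bbbq$ --- possible because the latter contains a copy of $\bbbq[G]$, built from a rational prime split completely in $L$ (Chebotarev), the $G$-orbit of a prime of $L$ above it, Dirichlet's unit theorem and finiteness of the class group --- then clear denominators, correct $\omega$ by a Hilbert 90 coboundary so that it is genuinely fixed by $H$, and finally pass to a suitable power of $\omega$ so that the relations prescribed by $M$ become honestly rational rather than merely rational up to roots of unity. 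By construction $M\subseteq M_f$; the fact that $G_\mu=H$ \emph{exactly} is what forces $g(\omega)/\omega$ to be non-torsion for $g\notin H$, i.e.\ $f$ non-degenerate; a genericity choice secures $\bbbq(\omega)=L^H$; and faithfulness of $(G,H)$ forces the splitting field of $f=\mathrm{minpoly}(\omega)$ to equal $L$ with group $G$. Hence (i)--(iii) hold and $\mathcal R_f$ is nontrivial, completing the contrapositive.

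The step I expect to be the main obstacle is precisely this construction: realising a prescribed admissible module as the genuine module of multiplicative $\bbbq$-relations among the conjugates of a concrete primitive element, while simultaneously keeping the Galois stabilizer equal to $H$, keeping $f$ non-degenerate, and reconciling ``lies in $\bbbq$'' with ``lies in $L^*$ up to roots of unity''. The number-theoretic inputs (Chebotarev, Dirichlet, finiteness of the class group, Hilbert 90) are standard; the real work is in assembling them so that none of these side conditions is destroyed --- and, on the other side, in pinning the admissibility witness down to stabilizer exactly $H$, which is the point at which non-degeneracy is indispensable.
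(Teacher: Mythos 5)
Your reduction and your first direction are sound, and in fact more self-contained than the paper's. The paper proves ``$\bbbq$-trivial $\Rightarrow$ $\mathcal{R}_f$ trivial'' by citing Girstmair's Propositions 2--3 to convert $\mathcal{R}_f$ into a $\bbbq$-admissible set; you instead prove admissibility of $M_f$ directly, by realizing $M_f$ as $\ker(x\mapsto x\mu_0)$ for an idempotent $\mu_0\in e_H\bbbq[G]e_H$ and then pinning down $G_{\mu_0}=H$ via the telescoping identity $\prod_{0\le i<\mathrm{ord}(g)}g^i\bigl(g(\omega)/\omega\bigr)=1$, which together with $(g(\omega)/\omega)^k\in\bbbq$ forces $g(\omega)/\omega$ to be a root of unity and contradicts non-degeneracy. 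That argument is correct (the passage from $\overline g-\overline e\in M_f$ to ``some power of $g(\omega)/\omega$ is rational'' is legitimate because $M_f$ is the saturation of $\mathcal{R}_f^\bbbq$), and it isolates exactly where non-degeneracy is used.

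The converse, however, contains a genuine gap. The paper obtains this direction entirely from Girstmair's Propositions 2 and 4: given a $\bbbq$-admissible module $M$ with witness $\mu$, $G_\mu=H$, there \emph{exists} $\alpha\in L^*$ with stabilizer exactly $H$ whose conjugates satisfy the relations in $M$ multiplicatively and no quotient of whose conjugates is a root of unity. This realization theorem is the entire content of the hard direction, and your proposal does not prove it --- it lists the standard inputs (Chebotarev, Dirichlet units, finiteness of the class group, Hilbert 90, passing to powers) but leaves unverified precisely the steps on which the statement hinges: that the Hilbert-90 correction and the final power-raising do not enlarge the relation module beyond $\ker(\cdot\,\mu)$ (needed for non-degeneracy, since any $g\notin H$ with $g(\omega)/\omega$ torsion would have to be excluded), that the stabilizer of the constructed $\omega$ is exactly $H$ rather than merely containing it, and that $\bbbq(\omega)=L^H$ with splitting field all of $L$. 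There is also a small unjustified claim earlier in that paragraph: replacing $M$ by $M+\mathcal{V}_1$ and ``modifying the witness by a scalar multiple of $\mathbf 1$'' does not obviously preserve admissibility unless $M$ has augmentation zero, since subtracting $\epsilon(\mu)e_G$ from $\mu$ changes $m\mu$ by a multiple of $\epsilon(m)e_G$. None of this is a wrong approach --- it is essentially how the cited realization result is proved --- but as written it is a plan rather than a proof, and you correctly identify it as the main obstacle; the paper closes it by citation rather than by construction.
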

\begin{proof}
``If'': Suppose on the contrary that the pair $(G,H)$ is not $\bbbq$-trivial. Then there is a $\bbbq$-admissible $\bbbq[G]$-submodule $M$ of $\bbbq[G/H]$ containing an element $v=\sum_{\overline{s}\in G/H}v_{\overline{s}}\overline{s}\in\bbbz[G/H]$ so that there are $\overline{s}_1\neq\overline{s}_2\in G/H$ satisfying $v_{\overline{s}_1}\neq v_{\overline{s}_2}$. By definition, there is an element $\mu\in\bbbq[G]$ with $G_{\mu}=H$ such that $v\mu=0$. Since $\bbbq$ is an algebraic number field, \cite{girstmair1999linear} Proposition 4 indicates that $M$ is admissible in the multiplicative sense. Now by \cite{girstmair1999linear} Proposition 2, there is an algebraic number $\alpha\in L^*$ with stabilizer $G_\alpha=H$ and the element $v$ is a non-trivial multiplicative relation between the conjugations of $\alpha$. What's more, any quotient of two conjugations of $\alpha$ cannot be a root of unity. These mean that the minimal polynomial $f$ of $\alpha$ over the field $\bbbq$ is non-degenerate and the lattice $\mathcal{R}_f$ is nontrivial. Denote by $F$ the splitting field of $f$ over $\bbbq$. Then $F$ is a subfield of $L$ and the Galois group $G_f$ of $f$ is isomorphic to $ G/Gal(L/F)$. Since $G_\alpha=H$, $G_{g(\alpha)}=gHg^{-1}$ for any $g\in G$. Hence the fixed field of the group $gHg^{-1}$ is $\bbbq[g(\alpha)]$. Note that $\bbbq[g(\alpha)]\subset F$, $gHg^{-1}\supset${ Gal}$(L/F)$ by Galois theory. Thus $\cap_{g\in G}\;gHg^{-1}\supset${ Gal}$(L/F)$. Since the subgroup $\cap_{g\in G}$ $gHg^{-1}$ of $G$ operates trivially on the set $G/H$ and the group $G$ operates faithfully on this set, $\cap_{g\in G}$ $gHg^{-1}=1$. Hence {Gal}$(L/F)=1$ and $G_f\simeq G$. In fact, $L=F$ and $G_f=G$.  So the existence of $f$ leads to a contradiction.

``Only If'': Assume that there is an irreducible non-degenerate polynomial $f\in\bbbq[x]$ ($f(0)\neq0$) satisfying the condition (iii) with splitting field equal to $L$ and exponent lattice $\mathcal{R}_f$ nontrivial. Suppose that the set of the roots of $f$ is $\Omega$ and $\alpha\in\Omega\subset L^*$ is with stabilizer $G_\alpha=H$. Thus there is a bijection $\tau:G/H\rightarrow\Omega, \overline{g}\mapsto g(\alpha)$ through which the permutation representations of $G$ on these two sets are isomorphic and we have the $\bbbz$-module isomorphism $\bbbz^\Omega\simeq \bbbz[G/H]$. By \cite{girstmair1999linear} Proposition 2, the lattice $\mathcal{R}_f\subset \bbbz^\Omega\simeq \bbbz[G/H]$ provides an admissible subset $M$ of $\bbbz[G/H]$ in the multiplicative sense. Then by Proposition 3 and Definition 3 in \cite{girstmair1999linear}, one sees that $M$ is a $\bbbq$-admissible subset. Since $\mathcal{R}_f$ is nontrivial, the $\bbbq[G]$-module generated by $M$ in $\bbbq[G/H]$ is neither $0$ nor $\mathcal{V}_1$. Hence the pair $(G,H)$ is not $\bbbq$-trivial, which is a contradiction.\qed
\end{proof}
For any irreducible non-degenerate polynomial $f\in\bbbq[x]$ with Galois group $G$ and a root stabilizer $H<G$, Proposition \ref{trivial} gives the weakest sufficient condition on the pair $(G,H)$ for $\mathcal{R}_f$ to be trivial (\emph{i.e.}, $(G,H)$ being $\bbbq$-trivial). However, the pair $(G,H)$
does not contain all the information needed to decide whether the lattice $\mathcal{R}_f$ is trivial. This is shown in the following example.
\begin{example}\label{not-enough}
Set $g(x)=x^4-4x^3+4x^2+6$, then $g$ is irreducible in $\bbbq[x]$. By the {\tt Unitary-Test} algorithm in \cite{yokoyama1995finding}, one proves that $g$ is non-degenerate. Set $L$ to be the splitting field of $g$ over the rational field and $G=${ Gal}$(L/\bbbq)$ its Galois group. Denote by \[\alpha=(2.35014\cdots)+\sqrt{-1}\cdot(0.90712\cdots)\] one of the roots of $g$, and set $H=G_\alpha$ to be its stabilizer. Computing with Algorithm 7.16 in \cite{kauers2005}, we obtain $\mathcal{R}_g=\bbbz$$\cdot\,(-2,2,2,-2)^T$, which is nontrivial (thus $(G,H)$ is not $\bbbq$-trivial by Proposition \ref{trivial}). 

Set $f(x)=g(x-1)$, then $f$ is irreducible over $\bbbq$ with splitting field $L$ and Galois group $G$. Moreover, the number $\alpha+1$ is a root of $f$ with stabilizer $H$. We note that the polynomials $g$ and $f$ share the same pair $(G,H)$. However, computing with \cite{kauers2005} Algorithm 7.16, we obtain that the lattice $\mathcal{R}_f=\{\mathbf{0}\}$ is trivial. 
\end{example} 

\subsection{Characterization of $\bbbq$-Triviality from the Perspective of Representation Theory And Group Theory}\label{2.2}
\begin{proposition}\label{equi}
Suppose that $G$ is a finite group and $H<G$ is a subgroup so that $G$ operates faithfully on $G/H$. Denote by $1^G_H$ the character of the permutation representation of $G$ on the set $G/H$. Then the pair $(G,H)$ is $\bbbq$-trivial iff the character $1^G_H-1$ is $\bbbq$-irreducible.
\end{proposition}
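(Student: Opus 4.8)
The plan is to work with the semisimple decomposition $\bbbq[G/H]=\mathcal{V}_1\oplus W$, where $W$ is the $\bbbq[G]$-submodule affording the character $\chi:=1^G_H-1$. Since $G$ is transitive on $G/H$, the trivial character occurs in $1^G_H$ with multiplicity one, so $\mathcal{V}_1$ is a simple submodule having no constituent in common with $W$; consequently $W$ is the \emph{unique} complement of $\mathcal{V}_1$, it coincides with $\mathcal{V}_1^{\perp}=\{v:\sum_{\overline s\in G/H}v_{\overline s}=0\}$, and every $\bbbq[G]$-submodule of $\bbbq[G/H]$ has the form $U$ or $\mathcal{V}_1\oplus U$ for a $\bbbq[G]$-submodule $U\subseteq W$. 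In these terms, ``$\chi$ is $\bbbq$-irreducible'' means exactly ``$W$ is a simple $\bbbq[G]$-module'', i.e.\ the only $\bbbq[G]$-submodules of $\bbbq[G/H]$ are $0,\mathcal{V}_1,W,\bbbq[G/H]$. Note also that $W$ (hence $\bbbq[G/H]$ as well) contains every difference $\overline s-\overline t$ with $\overline s\neq\overline t$ in $G/H$.

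For the implication ``$\chi$ $\bbbq$-irreducible $\Rightarrow$ $(G,H)$ $\bbbq$-trivial'' I would appeal to Proposition~\ref{trivial}. Pick a Galois extension $L/\bbbq$ with group $G$ and let $f\in\bbbq[x]$ be any polynomial satisfying conditions (i)--(iii) of that proposition, of degree $n=[G:H]$ (the case $n=1$ being trivial). Identifying $\bbbq^{n}\cong\bbbq[G/H]$ through a root-to-coset bijection as in the proof of Proposition~\ref{trivial}, the $\bbbq$-span $M_f:=\mathcal{R}_f\otimes\bbbq$ is a $\bbbq[G]$-submodule, so $M_f\in\{0,\mathcal{V}_1,W,\bbbq[G/H]\}$. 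If $M_f\supseteq W$, then $M_f$ contains some $\overline s-\overline t$ with $\overline s\neq\overline t$, so a nonzero integer multiple of the corresponding vector lies in $\mathcal{R}_f$; this says that the quotient of two distinct conjugate roots of $f$ is a root of unity, contradicting the non-degeneracy of $f$. Hence $M_f\subseteq\mathcal{V}_1$, which forces $\mathcal{R}_f\subseteq\mathcal{V}_1\cap\bbbz^{n}$, i.e.\ $\mathcal{R}_f$ is trivial; by Proposition~\ref{trivial}, $(G,H)$ is $\bbbq$-trivial.

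For the converse I would argue the contrapositive: assuming $W$ is not simple, I must produce a $\bbbq$-admissible $\bbbq[G]$-submodule other than $0$ and $\mathcal{V}_1$ (which, by the definition of $\bbbq$-triviality, is all that is needed). The tool is Girstmair's correspondence (\cite{girstmair1999linear}, Definition~3 and Propositions~2--4): unwinding it, a $\bbbq[G]$-submodule $M$ should be $\bbbq$-admissible precisely when it is realizable as $\mathcal{R}_{\alpha}\otimes\bbbq$ for an $\alpha\in L^{*}$ with $G_{\alpha}=H$ none of whose conjugate quotients is a root of unity; combined with the first paragraph this means the $\bbbq$-admissible submodules are exactly those containing \emph{no} difference $\overline s-\overline t$ ($\overline s\neq\overline t$), so that $0$ and $\mathcal{V}_1$ are $\bbbq$-admissible while $\bbbq[G/H]$ and $W$ are not. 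It therefore suffices to find a nonzero proper $\bbbq[G]$-submodule $U\subseteq W$ disjoint from the finite set $S=\{\overline s-\overline t:\overline s\neq\overline t\}$: if some $\bbbq$-irreducible constituent of $W$ occurs with multiplicity $\geq 2$, then $W$ has infinitely many simple submodules and all but finitely many miss $S$; otherwise $W=W_1\oplus\cdots\oplus W_k$ with $k\geq 2$ pairwise non-isomorphic simple summands, and one checks that at least one $W_i$ is disjoint from $S$, using that the coset $\overline 1=H$ has stabilizer exactly $H$, i.e.\ $H=\bigcap_i G_{w_i}$ with $w_i$ the image of $\overline 1$ in $W_i$, together with the faithfulness of the action of $G$ on $G/H$. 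Such a $U$ is then a $\bbbq$-admissible submodule distinct from $0$ and $\mathcal{V}_1$, so $(G,H)$ is not $\bbbq$-trivial.

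The main obstacle is the converse, and inside it the last two points: extracting from Girstmair's results the clean, purely module-theoretic description of the $\bbbq$-admissible submodules, and proving that a non-simple $W$ always possesses a nonzero proper submodule missing $S$. It is precisely at this last step --- equivalently, in ruling out the configuration in which every simple summand of a two-term decomposition of $W$ meets $S$ --- that the hypothesis that $H$ is \emph{exactly} the coset stabilizer (faithfulness of $G$ on $G/H$) does its work. The forward implication, once Proposition~\ref{trivial} is granted, is essentially formal.
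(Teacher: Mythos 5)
The paper's own proof is a two-line citation: Girstmair's Proposition~12 (the pair is $\bbbq$-trivial iff it is primitive \emph{and} $1^G_H-1$ is $\bbbq$-irreducible) combined with Dixon's theorem that $\bbbq$-irreducibility of $1^G_H-1$ already forces primitivity. Your attempt to reprove the equivalence from scratch has genuine gaps at exactly the points where those cited results carry the load. The most immediate one is in your forward direction: you begin with ``Pick a Galois extension $L/\bbbq$ with group $G$.'' Proposition~\ref{equi} is stated for an arbitrary finite group $G$ acting faithfully on $G/H$; nothing guarantees that $G$ is realizable as a Galois group over $\bbbq$, and producing such an $L$ in general is the inverse Galois problem. (Compare the proof of Proposition~\ref{2homoqtri}, where the author must invoke Feit--Thompson and Shafarevich precisely to justify this step for odd-order groups.) Since $\bbbq$-triviality is defined purely inside $\bbbq[G]$, the forward implication cannot be routed through Proposition~\ref{trivial} without first settling realizability; it has to be argued representation-theoretically.

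In the converse, everything rests on your claim that the $\bbbq$-admissible submodules are exactly the difference-free ones. Only one implication (``admissible $\Rightarrow$ contains no $\overline s-\overline t$'') follows from Girstmair's correspondence, since a relation $\overline s-\overline t$ would force two distinct conjugates of $\alpha$ to coincide; the implication you actually need --- that every difference-free submodule is annihilated by some $\mu$ with stabilizer \emph{exactly} $H$ --- is a substantive statement about the ideal structure of $\bbbq[G]$ that you flag with ``should'' and never establish (and again, your proposed route to it via $\mathcal{R}_\alpha$ presupposes the existence of $L$). Likewise, the final step, that a non-simple $W$ always possesses a nonzero proper submodule missing all differences, is dismissed with ``one checks''; this is essentially the entire content of Girstmair's Proposition~12, which treats the imprimitive case (block submodules, which are difference-free) and the primitive reducible case by separate arguments. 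As written, the two unproven halves of your converse together amount to restating the theorem rather than proving it.
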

\begin{proof}
By \cite{girstmair1999linear} Proposition 12, the pair $(G,H)$ is $\bbbq$-trivial iff $(G,H)$ is primitive and the character $1^G_H-1$ is $\bbbq$-irreducible. By \cite{dixon2005permutation} Theorem 3,  if the character $1^G_H-1$ is $\bbbq$-irreducible, then $(G,H)$ is primitive.\qed
\end{proof}
Throughout the paper, \emph{a root of rational} refers to an algebraic number $\alpha$ such that there is a positive rational integer $k$ ensuring $\alpha^k\in\bbbq$.
\begin{remark}\label{replacedbyror}
In the settings of Proposition \ref{trivial}, when $(G,H)$ is $\bbbq$-trivial, $(G,H)$ is primitive. This is equivalent to the condition that $H$ is a maximal subgroup of $G$. A polynomial $f$ satisfying the conditions (ii) and (iii) in Proposition \ref{trivial} has a root $\alpha$ with stabilizer $G_\alpha=H$ and the fixed field $\bbbq[\alpha]$ of the group $H$ is a minimal intermediate field of the extension $L/\bbbq$ by Galois theory. Note that $f$ is irreducible over $\bbbq$. If $f$ is degenerate with no root being a root  of rational, then there is an integer $k\neq 0$ so that $1<\deg(\alpha^k)<\deg(\alpha)$. Thus $\bbbq\subsetneqq
\bbbq[\alpha^k]\subsetneqq\bbbq[\alpha]$, which contradicts the minimality of the field $\bbbq[\alpha]$. Thus, when $(G,H)$ is $\bbbq$-trivial, a polynomial $f$ satisfying the conditions (ii) and (iii) is either non-degenerate or with all roots being roots of rational. Hence the condition (i) in Proposition \ref{trivial} can be replaced by the condition that ``$f$ is irreducible over $\bbbq$ with no root being a root of rational''.
\end{remark}

\begin{proposition}\label{gt}
Let $(G,H)$ be as in Proposition \ref{equi}. Then, regarded as a permutation group operating on the set $G/H$, the group $G$ satisfies exactly one of the following conditions iff the pair $(G,H)$ is $\bbbq$-trivial: 

(i) $G$ is doubly transitive; 

(ii) $G$ is of affine type (but not doubly transitive) of degree $p^d$ for some prime $p$ and $G=M\rtimes H$, where $M\simeq\bbbf_p^d$ is the socle of $G$ and the subgroup $H$ is isomorphic to a subgroup of $GL(d,p)$; moreover, let $Z$ be the center of the group $GL(d,p)$ and regard $H$ as a subgroup of $GL(d,p)$, the group $HZ/Z$ is a transitive subgroup of $PGL(d,p)$ operating on the projective points; 

(iii) $G$ is almost simple (but not doubly transitive) of degree $\frac{1}{2}q(q-1)$, where $q=2^f\geq8$ and $q-1$ is prime, and either $G=PSL_2(q)$ or $G=P\Gamma L_2(q)$ with the size of the nontrivial subdegrees $q+1$ or $(q+1)f$, respectively.
\end{proposition}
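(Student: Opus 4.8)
The plan is to invoke Proposition~\ref{equi} at the outset, which reduces the claim to proving that the character $\chi:=1^G_H-1$ is $\bbbq$-irreducible if and only if $G$, acting on $G/H$, satisfies exactly one of (i)--(iii). Since $\chi$ is a genuine character, ``$\chi$ is $\bbbq$-irreducible'' means exactly that the complex irreducible constituents of $\chi$ form a single Galois orbit, each with multiplicity one; in particular they share a common degree. The mutual exclusivity of (i),(ii),(iii) is then immediate: (i) is disjoint from (ii) and (iii) by the parenthetical ``but not doubly transitive'', while (ii) and (iii) are distinguished by the O'Nan--Scott type of the (necessarily primitive) group $G$, the socle being elementary abelian and regular in (ii) but nonabelian simple in (iii). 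So the task is to prove ``$\chi$ is $\bbbq$-irreducible $\iff$ (i), (ii) or (iii) holds''.

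For the implication ``$\Leftarrow$'' I would treat the three families directly. If $G$ is doubly transitive then $\chi$ is absolutely irreducible (Burnside), takes the integer values $\mathrm{fix}(g)-1$ on each $g\in G$, and is afforded by the $\bbbq[G]$-module complement of $\mathcal V_1$ in $\bbbq[G/H]$; an absolutely irreducible character afforded by a $\bbbq[G]$-module is $\bbbq$-irreducible. In the affine case, write $G=M\rtimes H$ with $M\cong\bbbf_p^d$; by Maschke's theorem $\bbbq[G/H]=\bbbq[M]$ is semisimple, and a short computation with the additive characters of $M$ shows that its $\bbbq$-irreducible constituents correspond bijectively, with multiplicity one, to the orbits on the dual group $\widehat M$ of the subgroup of $GL(d,p)$ generated by $H$ and $\mathrm{Gal}(\bbbq(\zeta_p)/\bbbq)$. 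The Galois group acts on $\widehat M\cong\bbbf_p^d$ as the scalar subgroup $Z$, so these orbits are the $HZ$-orbits, and transitivity of $HZ/Z$ on projective points says precisely that $HZ$ has the two orbits $\{0\}$ and $\widehat M\setminus\{0\}$; hence $\bbbq[M]$ is (trivial module) $\oplus$ (irreducible) and $\chi$ is $\bbbq$-irreducible. Finally, in the almost simple case $1^G_H$ is the permutation character of $PSL_2(q)$ (respectively $P\Gamma L_2(q)$), $q=2^f$, on the cosets of the normalizer of a maximal nonsplit torus --- a maximal subgroup of order $2(q+1)$ (respectively $2(q+1)f$). Reading off the character table of $PSL_2(q)$ one checks that the nontrivial constituents of $\chi$ are exactly the principal-series characters of degree $q+1$ (respectively their $P\Gamma L_2(q)$ analogues), each with multiplicity one and parametrized by the nontrivial characters of a cyclic group of order $q-1$ up to inversion (and up to field automorphisms); so when $q-1$ is prime these are permuted transitively by $\mathrm{Gal}(\bbbq(\zeta_{q-1})/\bbbq)$ and $\chi$ is $\bbbq$-irreducible.

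For ``$\Rightarrow$'', assume $\chi$ is $\bbbq$-irreducible. Then $(G,H)$ is primitive (a proper chain $H<K<G$ would split $\chi=(1^G_K-1)+(1^G_H-1^G_K)$ into two nonzero characters; this is the fact already used in Proposition~\ref{equi}). Now apply the O'Nan--Scott theorem. In every type other than the affine and the almost simple one, $G$ has a socle $T^k$ with $k\ge 2$, and the restriction of the complex permutation module to the socle already splits off nontrivial constituents of at least two distinct degrees --- outer tensor products $\chi_1\otimes\cdots\otimes\chi_k$ of characters of the simple factors cannot all have the same degree while genuinely occurring, since a nonabelian simple group has irreducible characters of at least two distinct nonlinear degrees --- so $\chi$ cannot be a single Galois orbit and these types are excluded. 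If $G$ is affine, running the ``$\Leftarrow$'' computation in reverse forces $HZ/Z$ to be transitive on projective points, giving (i) or (ii) (the remaining conditions in (ii), namely that $M$ is the socle and $H$ embeds in $GL(d,p)$, are automatic for primitive affine groups). If $G$ is almost simple, then either it is doubly transitive, which is (i), or $\chi$ is a Galois orbit of $r-1\ge 2$ constituents of equal degree $(n-1)/(r-1)$, so $G$ is an almost simple primitive group of rank $\ge 3$ with an irreducible character of relatively small degree and many Galois conjugates; invoking the classification of finite simple groups --- concretely, the classification of $2$-transitive and of small-rank primitive almost simple groups together with the analysis of their permutation characters, in the spirit of \cite{dixon2005permutation} --- one finds that the only possibilities are $PSL_2(q)$ and $P\Gamma L_2(q)$ with $q=2^f\ge 8$ and $q-1$ prime, acting as in (iii).

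The hard part is precisely this last step: excluding every other almost simple primitive group genuinely requires CFSG together with a nontrivial analysis of permutation characters, so in practice one would cite the existing classification (notably \cite{dixon2005permutation}, which is already invoked in Proposition~\ref{equi}) rather than reprove it here. A secondary, more routine obstacle is pinning down the exact data in (iii) --- that the point stabilizers have order $2(q+1)$ and $2(q+1)f$ and that the nontrivial subdegrees are $q+1$ and $(q+1)f$ --- which is a direct but careful computation inside $PSL_2(2^f)$ and $P\Gamma L_2(2^f)$ once the family has been isolated.
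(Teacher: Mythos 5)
Your overall route coincides with the paper's: Proposition~\ref{equi} reduces everything to the $\bbbq$-irreducibility of the character $1^G_H-1$, and the paper then disposes of the whole proposition in one line by citing Theorems 3 and 12 of \cite{dixon2005permutation} together with Corollary 1.6 of \cite{bamberg2013classification}. What you do differently is to unpack those citations: you supply the character-theoretic verifications for each of the three families (Burnside for (i); the decomposition of $\bbbq[M]$ via additive characters and the identification of the Galois action with the scalar group $Z$ for (ii); the principal series of $PSL_2(2^f)$ and the primality of $q-1$ for (iii)), and you sketch the O'Nan--Scott reduction for the converse. This buys a largely self-contained sufficiency direction, but at the one genuinely hard step --- ruling out every other almost simple primitive group of rank at least $3$ --- you, like the paper, must fall back on the CFSG-based classification contained in the very references the paper cites, so the logical dependence of the two proofs is ultimately identical. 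Two points deserve care if you write this up. First, ``$\bbbq$-irreducible'' does not mean that the complex constituents occur with multiplicity one, but with common multiplicity equal to the Schur index $m_\bbbq(\psi)$; your criterion survives because $1^G_H-1$ is afforded by a rational module (forcing $m_\bbbq(\psi)=1$ once the multiplicities are one), but the counting in your rank-$r$ step ($r-1$ constituents of degree $(n-1)/(r-1)$) silently assumes the Schur index is trivial and should be stated accordingly. Second, the exclusion of the remaining O'Nan--Scott types via ``tensor factors of unequal degree'' is a heuristic as written, not a proof; that exclusion is precisely part of the content of Theorem 12 of \cite{dixon2005permutation} and is better cited than asserted.
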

\begin{proof}
This is a combination of Theorem 3 and Theorem 12 in \cite{dixon2005permutation} together with Corollary 1.6 in \cite{bamberg2013classification}.\qed
\end{proof}
Denote by $\mathcal{P}$ the set of prime numbers and by $\mathcal{P}^\omega$ the set of prime powers $\{p^d|\,p\in\mathcal{P},d\in\bbbz_{\geq1}\}$. A useful corollary is as follows:
\begin{corollary}\label{corosp}
Suppose that a polynomial $f\in\bbbq[x]$ $(f(0)\neq0)$ is irreducible with Galois group $G$ and a root stabilizer $H$. If the number $\deg(f)$ is NOT in the set 
\begin{equation}\label{sp}
\mathcal{S}=\mathcal{P}^\omega\cup\big\{2^{f-1}(2^f-1)\,\big|\,f\in\bbbz_{\geq3},2^f-1\in\mathcal{P}\big\},
\end{equation}
then the pair $(G,H)$ is $\bbbq$-trivial iff it is doubly transitive.
\end{corollary}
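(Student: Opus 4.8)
The plan is to read off the result directly from the trichotomy in Proposition \ref{gt}. Since $f$ is irreducible with Galois group $G$ acting on the roots, and $H$ is a root stabilizer, $G$ acts faithfully and transitively on $G/H$, which is identified with the root set; in particular $\deg(f)=[G:H]$ is exactly the degree of this permutation representation. By Proposition \ref{gt}, $(G,H)$ being $\bbbq$-trivial is equivalent to $G$ (as a permutation group on $G/H$) falling into one of the three cases (i), (ii), (iii). Case (i) is precisely the doubly transitive case, so it suffices to show that if $\deg(f)\notin\mathcal{S}$, then cases (ii) and (iii) cannot occur; the reverse implication ``doubly transitive $\Rightarrow$ $\bbbq$-trivial'' is immediate from case (i) of Proposition \ref{gt} and needs no hypothesis on the degree.

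The key step is therefore a bookkeeping argument on the possible degrees in cases (ii) and (iii). In case (ii), the degree of the representation is $p^d$ for some prime $p$ and $d\geq1$, hence $\deg(f)\in\mathcal{P}^\omega\subseteq\mathcal{S}$. In case (iii), the degree is $\tfrac12 q(q-1)=2^{f-1}(2^f-1)$ with $q=2^f\geq8$ (so $f\geq3$) and $2^f-1$ prime, hence $\deg(f)\in\big\{2^{f-1}(2^f-1)\,\big|\,f\in\bbbz_{\geq3},\,2^f-1\in\mathcal{P}\big\}\subseteq\mathcal{S}$. Thus if $\deg(f)\notin\mathcal{S}$, both case (ii) and case (iii) are excluded, so $(G,H)$ is $\bbbq$-trivial if and only if case (i) holds, i.e. if and only if $G$ is doubly transitive on the roots. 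Combining the two directions yields the stated equivalence.

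I do not expect a genuine obstacle here: the corollary is essentially a restatement of Proposition \ref{gt} together with the observation that the degree set $\mathcal{S}$ in \eqref{sp} is exactly the union of the degree sets arising in the two non-doubly-transitive cases. The only point requiring a little care is the identification $\deg(f)=[G:H]$ and the faithfulness and transitivity of the action — which follow from irreducibility of $f$ and $H$ being a point stabilizer — so that Proposition \ref{gt} applies verbatim. No separate subtlety (such as the non-degeneracy conditions appearing in Proposition \ref{trivial}) is needed, since the present statement is purely group-theoretic.
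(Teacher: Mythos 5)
Your proposal is correct and matches the paper's intent exactly: the corollary is stated as an immediate consequence of Proposition \ref{gt} (the paper gives no separate proof), and your argument --- that the degrees arising in cases (ii) and (iii) are precisely the elements of $\mathcal{S}$, so excluding $\mathcal{S}$ leaves only the doubly transitive case --- is the intended reading. The side remarks on faithfulness, transitivity, and $\deg(f)=[G:H]$ are the right details to check and are handled correctly.
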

\subsection{Particular $\bbbq$-Trivial Pairs}\label{2.3}

Besides the doubly transitive pairs $(G,H)$, the author provided some other particular $\bbbq$-trivial pairs in \cite{girstmair1999linear} Proposition 13--15. A permutation group $G$ on a set $S$ is called \emph{doubly homogeneous} if for any two subsets $\{s_1,s_2\},\{t_1,t_2\}$ of $S$, there is some  $g\in G$ so that $\{g(s_1),g(s_2)\}=\{t_1,t_2\}$.  In this subsection, we prove that any doubly homogeneous pair $(G,H)$ is also $\bbbq$-trivial.
\begin{proposition}\label{2homoqtri}
Suppose that $G$ is a finite group and $H<G$ is a subgroup so that $G$ operates faithfully on $G/H$. If the  pair $(G,H)$ is doubly homogeneous, then it is $\bbbq$-trivial.
\end{proposition}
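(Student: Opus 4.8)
The plan is to verify the representation-theoretic criterion of Proposition~\ref{equi}, i.e.\ to show that the character $1^{G}_{H}-1$ is $\bbbq$-irreducible. If $G$ is doubly transitive on $G/H$ this is immediate (for instance by Proposition~\ref{gt}(i); indeed $1^{G}_{H}-1$ is then already absolutely irreducible), so from now on I assume $G$ is doubly homogeneous but \emph{not} doubly transitive. The first step is purely combinatorial: double homogeneity means $G$ is transitive on the $2$-element subsets of $G/H$, so if there are $a$ self-paired and $b$ transpose-pairs of non-diagonal $G$-orbitals then $a+b=1$ and the rank is $1+a+2b$; hence the rank is $2$ (the doubly transitive case, now excluded) or $3$, and in the latter case there are exactly two non-diagonal orbitals $\Delta,\Delta^{*}$ interchanged by transposition with $\Delta\neq\Delta^{*}$. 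Thus $1^{G}_{H}=1+\chi_{1}+\chi_{2}$ with $\chi_{1}\neq\chi_{2}$ distinct nontrivial irreducible complex characters, and the task reduces to showing $\chi_{1}+\chi_{2}$ is $\bbbq$-irreducible; the crux is that $\chi_{1}$ and $\chi_{2}$ are \emph{not real-valued} (hence complex conjugates of each other).

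To prove non-reality I would use the orbital digraph. Let $A$ be the $\{0,1\}$-adjacency matrix of $\Delta$ on $G/H$; then $A$ commutes with the permutation action of $G$, the transpose $A^{T}$ is the adjacency matrix of $\Delta^{*}$, and $A+A^{T}=J-I$ (with $J$ the all-ones matrix) while $A\neq A^{T}$. Decompose $\bbbc[G/H]$, orthogonally for the standard inner product, into the trivial line (on which $A$ acts by a scalar $d_{0}>0$) and the two multiplicity-one isotypic components $V_{1},V_{2}$ affording $\chi_{1},\chi_{2}$ (on which $A$ acts by scalars $d_{1},d_{2}$); since $A$ is a real matrix, $A^{T}$ acts on $V_{i}$ by $\overline{d_{i}}$. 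Restricting $A+A^{T}=J-I$ to $V_{i}$, where $J$ vanishes, gives $d_{i}+\overline{d_{i}}=-1$. If $d_{1}$ and $d_{2}$ were both real numbers they would both equal $-\frac{1}{2}$, forcing $A=(d_{0}+\frac{1}{2})\frac{1}{n}J-\frac{1}{2}I$ with $n=|G/H|$; but then every off-diagonal entry of $A$ equals $\frac{d_{0}+1/2}{n}$ and every diagonal entry equals that number minus $\frac{1}{2}$, and these cannot both lie in $\{0,1\}$ --- a contradiction. Hence some $d_{i}$ is non-real, and since $A$ is real its non-real eigenvalues come in conjugate pairs, so $d_{2}=\overline{d_{1}}$ is non-real; complex conjugation on $\bbbc[G/H]$ fixes the rational basis $G/H$ and the real matrix $A$, hence commutes (conjugate-linearly) with the $G$-action and with $A$, so it carries the $d_{1}$-eigenspace $V_{1}$ onto the $d_{2}$-eigenspace $V_{2}$. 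Therefore $\chi_{2}=\overline{\chi_{1}}\neq\chi_{1}$.

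Finally, I would close the argument with elementary character arithmetic. As a $\bbbq[G]$-module, $\bbbq[G/H]=\mathcal{V}_{1}\oplus V$ where $V$ has complex character $\chi_{1}+\chi_{2}$; pick a $\bbbq$-irreducible summand $W\subseteq V$ having $\chi_{1}$ among its complex constituents. Its complex character is $m\sum_{\sigma}\chi_{1}^{\sigma}$, the sum over the $\mathrm{Gal}(\overline{\bbbq}/\bbbq)$-orbit of $\chi_{1}$ and $m$ the Schur index. Since $\chi_{2}=\overline{\chi_{1}}$ lies in that orbit and this character is dominated by $\chi_{1}+\chi_{2}$, the orbit is exactly $\{\chi_{1},\chi_{2}\}$ and $m=1$; hence $W$ and $V$ have the same complex character, so $V=W$ is $\bbbq$-irreducible, and therefore $1^{G}_{H}-1$ is $\bbbq$-irreducible. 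By Proposition~\ref{equi}, $(G,H)$ is $\bbbq$-trivial.

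The step I expect to be the main obstacle is the middle one: the real content is that double homogeneity --- as opposed to merely rank $3$ --- is exactly what pins the orbital digraph down to ``tournament type'', with the two non-trivial orbitals swapped by transposition, and this is precisely the feature that forces the two non-trivial eigenvalues of the orbital matrix, equivalently the characters $\chi_{1},\chi_{2}$, to be non-real. Once this is secured, the combinatorial reduction and the character/Galois bookkeeping are routine. (Alternatively one could invoke Kantor's classification of the doubly homogeneous but not doubly transitive permutation groups --- all of affine type inside a one-dimensional semilinear group over a field of order $\equiv 3\pmod 4$ --- together with Proposition~\ref{gt}(ii); the argument above has the advantage of avoiding that classification.)
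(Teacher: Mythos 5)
Your proof is correct, and it takes a genuinely different route from the paper's. The paper never touches the permutation character in this proof: it notes that a doubly homogeneous but not doubly transitive group has odd order (Exercise 2.1.11 of \cite{dixon1996permutation}), invokes Feit--Thompson and Shafarevich's theorem to realize $G$ as a Galois group over $\bbbq$, and then applies the arithmetic result Theorem 3.2 of \cite{zheng2019computing} (triviality of $\mathcal{R}_f^{\bbbq}$ when the Galois group acts doubly homogeneously and no root is a root of rational) to every polynomial satisfying conditions (i)--(iii) of Proposition~\ref{trivial}; the ``only if'' direction of Proposition~\ref{trivial} then converts the triviality of all these lattices back into $\bbbq$-triviality of the pair. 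You instead verify the criterion of Proposition~\ref{equi} directly: the rank-$2$-or-$3$ dichotomy, the tournament relation $A+A^{T}=J-I$ forcing the two nontrivial constituents of $1^{G}_{H}$ to be a conjugate pair of non-real irreducible characters, and the Schur-index/Galois-orbit bookkeeping are all sound. (The only spots worth spelling out in a polished version are that $A^{T}$ is the Hermitian adjoint of the real matrix $A$, whence it acts by $\overline{d_i}$ on each multiplicity-one isotypic component, and that $d_{1}\notin\{d_{0},d_{2}\}$, so that $V_{1}$ really is the full $d_{1}$-eigenspace swapped with $V_{2}$ by complex conjugation.) What your approach buys is a self-contained, purely representation-theoretic argument that avoids Feit--Thompson, the inverse Galois problem, and the arithmetic input of \cite{zheng2019computing}, and it yields the sharper structural fact $1^{G}_{H}=1+\chi+\overline{\chi}$ with $\chi$ non-real; what the paper's route buys is brevity given the results it already cites, and it showcases Proposition~\ref{trivial} as a two-way bridge between the arithmetic and the group-theoretic sides, which is the theme of the section.
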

\begin{proof}
When the pair $(G,H)$ is doubly transitive, the character $1^G_H-1$ is actually absolutely irreducible. So we are done. Suppose that the pair $(G,H)$ is doubly homogeneous but not doubly transitive, then $G$ is of odd order (Exe. 2.1.11 of \cite{dixon1996permutation}). Then by \cite{feit1963solvability} and {\fontencoding{OT2}\selectfont TEOREMA 7} of \cite{shafarevich1954}, $G$ is the Galois group of a finite Galois extension of the rational field.

Let $f\in\bbbq[x]\;(f(0)\neq0)$ be any polynomial satisfying the conditions (i)--(iii) in Proposition \ref{trivial}. Then $f$ is irreducible and non-degenerate with Galois group $G$. Since the pair $(G,H)$ is doubly homogeneous and the condition (iii) holds, $G$ operates in a doubly homogeneous way on the set $\Omega$ of the roots of $f$. Doubly homogeneousness naturally requires that $\deg(f)=|\Omega|=|G/H|\geq2$. Hence $f$ has no root being a root of rational since it is non-degenerate. By \cite{zheng2019computing} Theorem 3.2, the lattice $\mathcal{R}_f^{\bbbq}$ is trivial and so is the lattice $\mathcal{R}_f$. Finally, according to Proposition \ref{trivial}, the pair $(G,H)$ is $\bbbq$-trivial.\qed
\end{proof}
The following example shows that a $\bbbq$-trivial pair need not be doubly homogeneous.
\begin{example}\label{zhenbaohan}
Set $L$ to be the splitting field of the irreducible polynomial $f=x^5-x^4-4x^3+3x^2+3x-1$ over the rational field, $G$ the Galois group. In fact, $G\simeq C_5$ is the cyclic group of order $5$, and the stabilizer of any root of $f$ is trivial. The faithful pair $(C_5,1)$ is $\bbbq$-trivial by Proposition \ref{p}. Nevertheless, the pair $(C_5,1)$ is not doubly homogeneous.
\end{example}

\begin{proposition}\label{p}
Let $(G,H)$ be as in Proposition \ref{trivial}. If the cardinality of the set $G/H$ is a prime, then $(G,H)$ is $\bbbq$-trivial.
\end{proposition}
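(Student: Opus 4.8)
The plan is to invoke the representation‑theoretic criterion of Proposition \ref{equi}: it suffices to prove that the character $\theta:=1^G_H-1$ is $\bbbq$-irreducible whenever $p:=|G/H|$ is prime. The whole point is that this question trivializes after restriction to a Sylow $p$-subgroup of $G$.

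First I would analyze a Sylow $p$-subgroup $P\leq G$. Since $G$ acts faithfully on the $p$-element set $G/H$, there is an embedding $G\hookrightarrow\mathrm{Sym}(G/H)$, and as a Sylow $p$-subgroup of the symmetric group on $p$ letters has order $p$, so does $P$ (it is nontrivial since $p\mid[G:H]\mid|G|$). The restricted action of $P$ on $G/H$ is still faithful, hence nontrivial, so $P$ has an orbit of size $>1$; orbit sizes divide $|P|=p=|G/H|$, so $P$ acts transitively, and therefore regularly, on $G/H$. Thus $G/H$ is a regular $P$-set, the restriction $\bbbq[G/H]|_P$ is the regular $\bbbq[P]$-module, and since $P$ is cyclic of order $p$ we have $\bbbq[P]\cong\bbbq[x]/(x^p-1)\cong\bbbq\times\bbbq(\zeta_p)$. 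Consequently $\bbbq[P]$ has exactly two $\bbbq$-irreducible characters: the trivial character $1_P$ and the ``deleted regular'' character $\Psi$ given by $\Psi(1)=p-1$ and $\Psi(g)=-1$ for $g\neq1$. Since $1^G_H|_P$ is the regular character $1_P+\Psi$, we get $\theta|_P=\Psi$.

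Next, write $\theta=\sum_i c_i\theta_i$ for the decomposition into pairwise distinct $\bbbq$-irreducible characters $\theta_i$ of $G$ with multiplicities $c_i\geq1$ (this is the character of the $\bbbq[G]$-module $W$ in a decomposition $\bbbq[G/H]=\mathcal{V}_1\oplus W$). Restricting to $P$, each $\theta_i|_P$ is the character of a $\bbbq[P]$-module, hence $\theta_i|_P=a_i1_P+b_i\Psi$ for some integers $a_i,b_i\geq0$. Comparing with $\theta|_P=\Psi=0\cdot1_P+1\cdot\Psi$ forces $\sum_i c_ia_i=0$ and $\sum_i c_ib_i=1$; since all terms are nonnegative, $a_i=0$ for every $i$, and a single index $i_0$ survives with $c_{i_0}=b_{i_0}=1$ and $b_i=0$ for $i\neq i_0$. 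But then $\theta_i|_P=0$, i.e.\ $\theta_i(1)=0$, for $i\neq i_0$, which no irreducible character satisfies; hence $i_0$ is the only index and $\theta=\theta_{i_0}$ is $\bbbq$-irreducible. Proposition \ref{equi} then yields that $(G,H)$ is $\bbbq$-trivial.

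I expect the only genuinely substantive step to be the passage to $P$ in the third paragraph — the realization that restricting the putative constituents of $\theta$ to the regular Sylow $p$-subgroup leaves room for exactly one of them, with multiplicity one. The remaining ingredients (the order and regularity of $P$, the splitting of the group algebra $\bbbq[x]/(x^p-1)$) are standard, and this route avoids both the classification underlying Proposition \ref{gt} and Burnside's theorem on transitive groups of prime degree. One could instead argue through that structure theory: a transitive group of prime degree is primitive and, by Burnside, is either $2$-transitive (case (i) of Proposition \ref{gt}) or a Frobenius subgroup $C_p\rtimes C_d$ of $AGL(1,p)$, which falls under case (ii) with the transitivity requirement on $PGL(1,p)$ vacuous; but the argument above is shorter and self-contained.
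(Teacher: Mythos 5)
Your argument is correct, and it is genuinely different from the one in the paper. The paper disposes of $|G/H|=2$ via double homogeneity (Proposition \ref{2homoqtri}) and, for odd primes, imports the number-theoretic theorem of Drmota and Skalba on multiplicative relations among conjugates of prime degree, feeding it back through the ``if'' direction of Proposition \ref{trivial}; that route leans on the arithmetic of algebraic numbers and on $G$ actually being realized as a Galois group over $\bbbq$. You instead work entirely inside Proposition \ref{equi}: the Sylow $p$-subgroup $P$ has order exactly $p$ (since $v_p(p!)=1$ and $p\mid[G:H]$), acts regularly on $G/H$, and the splitting $\bbbq[P]\cong\bbbq\times\bbbq(\zeta_p)$ leaves only two $\bbbq$-irreducible constituents for the restricted permutation module; the bookkeeping on the nonnegative multiplicities $a_i,b_i$ then pins $1^G_H-1$ down as a single $\bbbq$-irreducible character. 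All the steps check out: $1^G_H-1$ is indeed the character of the augmentation submodule $\mathcal{V}_0$, the two $\bbbq$-irreducible characters of $P$ are linearly independent, and the conclusion $\theta_i(1)=0$ for any putative extra constituent is the right contradiction. What your proof buys is a uniform, self-contained, purely representation-theoretic argument that bypasses Burnside's theorem on groups of prime degree, the classification behind Proposition \ref{gt}, and the Drmota--Skalba input, and it makes clear that only faithfulness and primality of $[G:H]$ are used, not the realizability of $G$ as a Galois group. The one cosmetic remark is that your case analysis silently covers $p=2$ as well (there $\Psi$ is the sign character and the argument is trivial), so unlike the paper you need no separate treatment of that case.
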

\begin{proof}
When $|G/H|=2$, the pair $(G,H)$ is doubly homogeneous and we are done. When $|G/H|$ is an odd prime, the proposition is a straightforward result of \cite{drmota1991multiplicative} Theorem 1 and Proposition \ref{trivial}.\qed
\end{proof}

The figure below shows the relations between different classes of $\bbbq$-trivial pairs. This is based on Theorem 3 of \cite{dixon2005permutation}, Corollary 1.6 of \cite{bamberg2013classification} and Proposition 3.1 of \cite{nottrans}.
\begin{figure}
 \setlength{\abovecaptionskip}{0cm}
\caption{Classification of $\bbbq$-trivial pairs.} \label{fig2}
\centering
 \includegraphics[width=\textwidth]{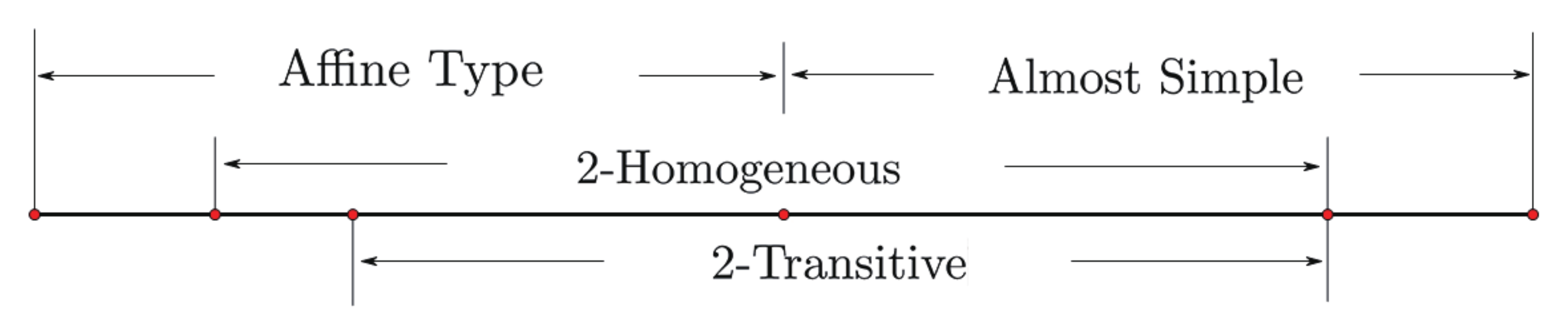}
\end{figure}
\subsection{An Algorithm Deciding $\bbbq$-Triviality of Galois Groups}
Assume that $f\in\bbbq[x]$ $(f(0)\neq0)$ is irreducible with Galois Group $G$ and a root stabilizer $H$. In this subsection, we develop an algorithm deciding whether a pair $(G,H)$ is $\bbbq$-trivial for such a polynomial $f$. Moreover, numerical results show that the algorithm is quite efficient compared with some other relative algorithms (see Table \ref{table3} and \ref{table4}). All numerical results are obtained on a desktop of WINDOWS 7 SYSTEM with 8GB RAM and a 3.30GHz Intel Core i5-4590 processor with 4 cores.
\subsubsection{The ``{{\tt IsQtrivial}}'' Algorithm}
Algorithm \ref{IsQtrivial} shown below is designed according to \S\,\ref{2.2} and \S\,\ref{2.3}. Step 4 of this algorithm is due to Proposition \ref{p} while Step 5 is based on Corollary \ref{corosp}.  The $\bbbq[G]$-submodule $B$ generated by $u$ in Step 6 is contained in the $\bbbq[G]$-submodule \[\mathcal{V}_0=\Big\{\sum_{\overline t\in G/H}a_{\overline t}\overline t\;\Big|\;a_{\overline t}\in\bbbq,\sum_{\overline t\in G/H}a_{\overline t}=0\Big\}\] with character $1^G_H-1$ and $\bbbq$-dimension $\deg(f)-1$. So the correctness of Step 7--10 follows from Proposition \ref{equi}.
\begin{algorithm}
\SetAlgoLined
\KwIn{An irreducible polynomial $f\in\bbbq[x]$ with $f(0)\neq0$;}
\KwOut{``\textbf{True}'' if the pair $(G,H)$ is $\bbbq$-trivial and ``\textbf{False}'' otherwise.}
\caption{{\tt IsQtrivial}}
\textbf{if }$(f$ is reducible or $f(0)==0)$\textbf{ then \{Return }``Error!''\}\textbf{ end}\\
\label{IsQtrivial}
\textbf{if }{$(\deg(f)$ is a prime$)$} \textbf{then \{Return True};\textbf{\} end}\\
Compute the Galois group $G$ of $f$;\\
\textbf{if }{$(G$ is doubly transitive$)$} \textbf{then \{Return True};\textbf{\} end}\\
\textbf{if }{$(\deg(f)\notin\mathcal{S}$ as defined in $(\ref{sp}))$} \textbf{then \{Return False};\textbf{\} end}\\
Compute $B=\,_{\bbbq[G]}$$\langle u\rangle$ with $u=\overline s-\overline1\in\bbbq[G/H]$ for an $s\notin H$;\\
\If {$(\dim(B)==\deg(f)-1$ and $B$ is $\bbbq$-irreducible$)$}
{\textbf{Return True};}
{\textbf{Return False};}
\end{algorithm}

Algorithm \ref{IsQtrivial} is implemented with Magma and random examples are generated to test it. A random polynomial $f$ of degree $n$ with $f(0)\neq0$ is generated in the following way: First, generate its leading coefficient and its constant term by picking integer numbers randomly from the set $\{\pm1,\ldots,\pm10\}$, then pick the rest of the coefficients of $f$ in the set $\{-10,-9,\ldots,10\}$ randomly. Second, check whether $f$ is irreducible: if it is, we are done; otherwise, go back to the first step. The numerical results are shown in Table \ref{table1}.

In Table \ref{table1} (and throughout the section), the notation ``$\#$Poly'' denotes the number of the polynomials that are generated in a single class. As can be seen, almost all the randomly generated polynomials have doubly transitive Galois groups. In fact, most of these Galois groups are symmetry groups. The algorithm is effective and efficient for the randomly generated examples of this kind. In order to test the algorithm for other types of groups, we take advantage of the Magma function {\tt PolynomialWithGaloisGroup}, which provides polynomials with all types of transitive Galois groups of degree between 2 and 15. The results are shown in Table \ref{table2}.

In both tables, the ``GaloisFail'' columns show, for each degree, the numbers of the polynomials with Galois groups computed unsuccessfully in Algorithm \ref{IsQtrivial} Step 3, which is  implemented by the Magma functions {\tt GaloisGroup} and {\tt GaloisProof}. There are more ``GaloisFail'' cases in Table \ref{table2}. The problem is: in those ``GaloisFail'' cases, though the Galois groups can be computed by the first function (which does not provide proven results), the second function returns error and fails to support the result. The ``Average Time'' in Table \ref{table2} excludes the ``GaloisFail'' examples, \emph{i.e.}, it only counts in the ``Qtrivial'' and the ``NotQtrivial'' cases. We see that  the algorithm is still efficient when the Galois group is successfully computed.
\begin{table*}
  \caption{Random Test for {\tt IsQtrivial}}\label{table1}
  \setlength{\abovecaptionskip}{0cm}
  \centering
   \begin{tabular}{|c|c|c|c|c|c|c|}
         \hline
            $\;$Deg$\;$&$\;$\#Poly$\;$&$\;$2-Transitive$\;$&$\;$Qtrivial$\;$&$\;$NotQtrivial$\;$&$\;$GaloisFail$\;$&$\;$Average Time (s)$\;$\\
        \hline
            6&10000&9989&9989&11&0&0.025644\\
      \hline
            8&10000&9998&9998&2&0&0.055090\\
         \hline
         9&10000&10000&10000&0&0&0.069871\\
               \hline
            15&10000&10000&10000&0&0&0.301505\\
                \hline
            20&10000&10000&10000&0&0&0.698264\\     
                            \hline
           28&10000&10000&10000&0&0&1.532056\\
                \hline
            60&40&40&40&0&0&32.2309\\       
      \hline
            81&40&40&40&0&0&107.486\\
      \hline
                  90&40&40&40&0&0&231.638\\
      \hline
                  120&40&40&40&0&0&2057.24\\
      \hline
    \end{tabular}
\end{table*}
\begin{table}
    \setlength{\abovecaptionskip}{0cm}
    \centering
  \caption{Testing {\tt IsQtrivial} by Different Galois Groups}\label{table2}
   \begin{tabular}{|c|c|c|c|c|c|c|}
         \hline
            $\;$Deg$\;$&$\;$\#Poly$\;$&$\;$2-Transitive$\;$&$\;$Qtrivial$\;$&$\;$NotQtrivial$\;$&$\;$GaloisFail$\;$&$\;$Average Time (s)\\
        \hline
            4&5&2&2&3&0&0.019\\ 
        \hline
            6&16&4&4&12&0&0.027\\ 
       \hline
            8&50&6&6&43&1&0.086\\ 
      \hline
            9&34&2&2&23&9&0.095\\ 
      \hline
            10&45&2&2&36&7&0.103\\ 
      \hline
            12&301&2&2&292&7&0.165\\
      \hline
            14&63&2&2&41&20&0.215\\
      \hline
            15&104&2&2&62&40&0.222\\
      \hline
    \end{tabular}
\end{table}
\begin{Large}
\begin{table}
 \setlength{\abovecaptionskip}{0cm}
    \centering 
 \caption{``{\tt IsQtrivial}'' Ensuring Triviality Efficiently}
\begin{tabular}{|c|c|c|c|c|c|c|}
 \hline
 \multirow{3}{*}{$\;$Deg$\;$} &\multirow{3}{*}{$\;$Polynomial$\;$} &\multicolumn{3}{c|}{Runtime (s)}\\
 \cline{3-5}
 &&  \multirow{2}{*}{$\;${\tt FindRelations}$\;$}&\multirow{2}{*}{$\;${\tt GetBasis}$\;$}& \multirow{2}{*}{$\;$\shortstack{{\tt IsQtrivial}\\$+\;${\tt IsROR}}$\;$}\\
  &&&&\\
 \hline
 \multirow{3}{*}{$4$} & $f^{(1)}$  &32.8947&83.7598 & 0.016\\
 \cline{2-5}
&$f^{(2)}$ & 19.1995 &54.343&0.016\\
   \cline{2-5}
  & $f^{(3)}$& 34.6466&90.4592 & 0.016\\
   \hline
 \multirow{3}{*}{$5$} & $g^{(1)}$ & OT&OT & 0.000\\
 \cline{2-5}
& $g^{(2)}$ & OT &OT& 0.000\\
   \cline{2-5}
  &$g^{(3)}$ & OT &OT& 0.000\\
   \hline
 \multirow{3}{*}{$9$} & $h^{(1)}$ & OT&OT & 0.047\\
 \cline{2-5}
& $h^{(2)}$ & OT &OT& 0.078\\
   \cline{2-5}
  & $h^{(3)}$ & OT &OT& 0.047\\
 \hline
 \end{tabular}
 \label{table3}
 \end{table}
 \end{Large}
\subsubsection{Ensuring Lattice Triviality}
By \cite{drmota1995relations} Theorem 2, almost all irreducible polynomials $f$ with $f(0)\ne0$ has trivial lattice $\mathcal{R}_f$. However, the general algorithms, {\tt FindRealtions} in \cite{kauers2005,ge1993} and {\tt GetBasis} in \cite{issac}, dealing with the general input which are arbitrarily given nonzero algebraic numbers instead of all the roots of a certain polynomial, are not every efficient in proving exponent lattice triviality in the latter case.

For a randomly generated irreducible polynomial $f$ with $f(0)\ne0$, if the function {\tt IsQtrivial}$(f)$ returns \textbf{True} and $f$ is proved to have no root being a root of rational by Algorithm 5 in \cite{preprint} (named ``{\tt RootOfRationalTest}'' therein, we call it ``${\tt IsROR}$'' here instead), then $\mathcal{R}_f$ is trivial by Proposition \ref{trivial} and Remark \ref{replacedbyror}. We call this the ``{\tt IsQtrivial}$+${\tt IsROR}'' procedure. Table \ref{table3} shows the efficiency of this procedure to prove the triviality of the exponent lattice of a randomly generated polynomial. The polynomials used here are with integer coefficients picked randomly from the set $\{-10,-9,\ldots,10\}$. 

The notation ``OT'' in Table \ref{table3} (and throughout the section) means the computation is not finished within two hours. As is shown in Table \ref{table3}, it is time consuming for the general algorithms {\tt FingRelations} and {\tt GetBasis} to prove the exponent lattice triviality of a generic polynomial. Thus the ``{\tt IsQtrivial}$+${\tt IsROR}'' procedure can be used before running either of the two general algorithms, when the inputs are all the roots of a certain polynomial. If the procedure fails to prove the triviality, then one turns to the general algorithms.
\subsubsection{The ``{\tt FastBasis}$_+$'' Algorithm}
Similar to the ``{\tt IsQtrivial}$+${\tt IsROR}'' procedure, Theorem 3.2 in \cite{zheng2019computing} allows one to prove lattice triviality of a polynomial by proving doubly homogeneousness of its Galois group and by checking the condition that none of its roots is a root of a rational. Based on this, the algorithm {\tt FastBasis} is designed to compute the lattice $\mathcal{R}_f$ fast for any $f$ in a \emph{generic} set $E\subset\bbbq[x]$ (Definition 5.1 of \cite{zheng2019computing}). 

Similarly, we can define another set $E_{+}\subset\bbbq[x]$ to be the set of polynomials $f$ for which both the following two conditions hold: 

\vspace{0.7mm}
\emph{$(i)$ $\exists c\in\bbbq^*$, $g\in\bbbq[x],k\in\bbbz_{\geq1}$ so that $f=cg^k$, $g$ is irreducible and $x$ does not divide $g(x)$;}

\emph{$(ii)$ all the roots of $g$ are roots of rational} or {\tt {IsQtrivial}}$(g)\;=\;$\textbf{{ True}}. 
\vspace{0.7mm}

\noindent Then, by Proposition \ref{2homoqtri} and Example \ref{zhenbaohan}, one claims that $E_{+}\supsetneqq E$. Thus $E_+$ is also generic in the sense of \cite{zheng2019computing}. Moreover, an algorithm similar to {\tt FastBasis}, which will be called ``{\tt FastBasis}$_+$'', can be obtained by replacing Steps 6--7 in Algorithm 6.1 of \cite{zheng2019computing} (namely, {\tt FastBasis}) by the following step:

\quad...

\textbf{\quad if} ({\tt  IsQtrivial}$(g)==$\textbf{ False }) \textbf{ then }\{\textbf{return} \textcolor{red}{F}\}\textbf{ end if};

\quad...

\noindent Like {\tt FastBasis}, the algorithm {\tt FastBasis}$_+$ computes the lattice $\mathcal{R}_f$ for any $f\in E_+$ while returning a special symbol ``\textcolor{red}{F}'' when $f\notin E_+$.

The algorithm {\tt FastBasis}$_+$ is implemented with Magma while the algorithm {\tt FastBasis} is implemented with Mathematica by the author of \cite{zheng2019computing}. In Table \ref{table4} we compare these two algorithms by applying them to a great deal of random polynomials of varies degree. For an $f\in\bbbq[x]$ of degree at most $n$, we define $h(f)=\max_{0\leq i\leq n}|c_{f,i}|$ with $c_{f,i}$ the coefficient of the term $x^i$ of $f$. The polynomials in Table \ref{table4} are picked randomly from the classes 
\[\bbbz_{10,n}[x]=\{f\in\bbbz[x]\,|\,h(f)\leq 10,\deg(f)\leq n\}.\]
In Table \ref{table4}, the notation ``$\#$Success'' denotes the number of those polynomials in each class for which the algorithm returns a lattice basis successfully within two hours, while the notation ``$\#$F'' gives the number of the polynomials in each class that are proved to be outside the set $E_+$ within two hours.  The average time only counts in all the ``Success'' examples. We can see from Table \ref{table4} and Fig. \ref{fig1} $\,$that for the small inputs with $n<15$, the algorithm {\tt FastBasis} is slightly more efficient while for those lager inputs with $n>15$, the algorithm {\tt FastBasis}$_+$ is much more efficient. This allows one to handle inputs with higher degree that were intractable before.
 \begin{table*}[htbp]
    \setlength{\abovecaptionskip}{0cm}
  \caption{{\tt FastBasis} \emph{v.s.} {\tt FastBasis}$_+$}\label{table4}
    \centering
\begin{tabular}{|c|c|c|c|c|cc|c|c|c|}
\hline
{\multirow{3}{*}{Class}}&\multirow{3}{*}{\#Poly}&\multicolumn{4}{c}{FastBasis}&\multicolumn{4}{|c|}{FastBasis$_+$}\\
\cline{3-10}
&&\multirow{2}{*}{$\#\,$Success}&\multirow{2}{*}{$\,\#\,$F$\,$}&\multirow{2}{*}{$\,$OT$\,$}& \multicolumn{2}{c|}{\multirow{2}{*}{\quad Average Time (s)\quad}}&\multirow{2}{*}{$\,$OT$\,$}&\multirow{2}{*}{$\,\#\,$F$\,$}&\multirow{2}{*}{$\#\,$Success}\\
&&&&&&&&&\\
\hline
{$n=6$} & 10000&9011&989&0&\multicolumn{1}{c|}{$\;\;$0.007304$\;\;$}&$\;\;$0.025499$\;\;$&0&989&9011\\
\hline
{$n=8$} &10000&9064&936&0&\multicolumn{1}{c|}{0.018372}&0.055328&0&936&9064\\
   \hline
{$n=9$} &10000&9113&887&0&\multicolumn{1}{c|}{0.029044}&0.069996&0&887&9113\\
   \hline
{$n=15$} &10000&9227&773&0&\multicolumn{1}{c|}{0.305941}&0.301110&0&773&9227\\
   \hline
{$n=20$}  &10000&9243&757&0&\multicolumn{1}{c|}{1.502110}&0.700131&0&757&9243\\
   \hline
{$n=28$}&10000&9279&721&0&\multicolumn{1}{c|}{9.29961}&1.527806&0&721&9279\\
   \hline
{$n=40$} &100&93&7&0&\multicolumn{1}{c|}{76.3928}&6.788000&0&7&93\\
   \hline
{$n=50$}  &100&96&4&0&\multicolumn{1}{c|}{315.523}&15.70400&0&4&96\\
   \hline
{$n=60$}  &35&33&1&1&\multicolumn{1}{c|}{1291.38}&31.27800&0&1&34\\
   \hline
{$n=81$}  &40&15&1&24&\multicolumn{1}{c|}{5539.67}&104.515&0&1&39\\
   \hline
{$n=90$}  &40&0&2&38&\multicolumn{1}{c|}{--}&224.413&0&2&38\\
   \hline
{$n=120$}  &40&--&--&--&\multicolumn{1}{c|}{--}&2058.228&0&2&38\\
  \hline
 \end{tabular}
 \end{table*}
\begin{scriptsize}
\begin{figure}
 \setlength{\abovecaptionskip}{0cm}
\caption{Comparing average runtime of two algorithms.} \label{fig1}
\centering
 \includegraphics[width=100mm]{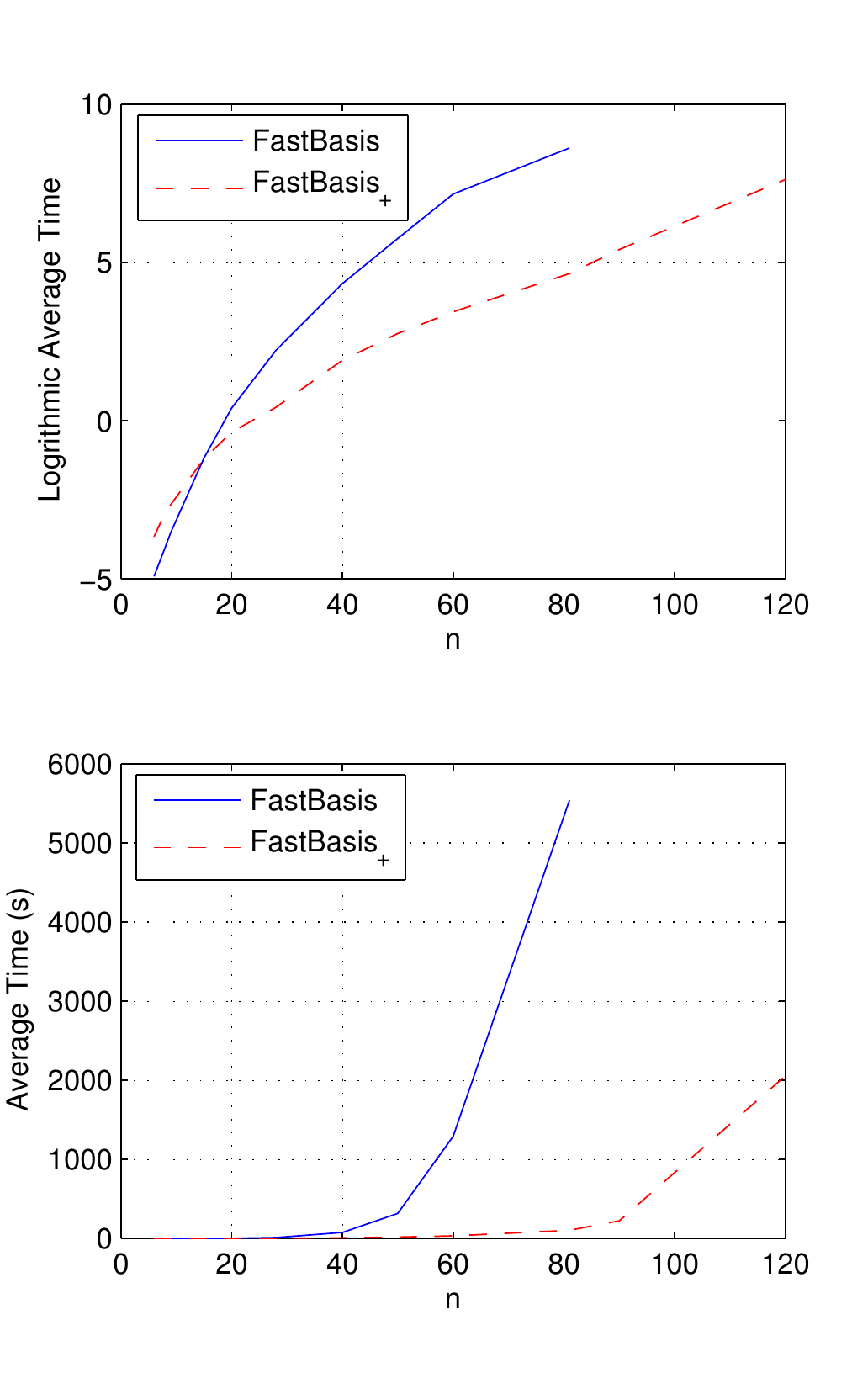}
\end{figure}
\end{scriptsize}
\section{Lattice Triviality through Galois-Like Groups}\label{3}
As is shown in Example \ref{not-enough}, provided only the pair $(G,H)$, one may not be able to decide whether the lattice $\mathcal{R}_f$ is trivial or not. Here $f$ is an irreducible polynomial with Galois group $G$ and a root stabilizer $H$. In this section, the concept of a Galois-like group is introduced. An equivalent condition for the lattice $\mathcal{R}_f$ to be trivial is given through the concept of a Galois-like group. 
\subsection{Root Permutations Preserving Multiplicative Relations}
Set $f\in\bbbq[x]\;(f(0)\neq0)$ to be a polynomial with no multiple roots. Denote by $\Sigma$ the symmetry group operating on the set $\Omega=\{r_1,\ldots,r_n\}$ of the roots of $f$. In the sequel, we denote by $\vec\Omega=(r_1,\ldots,r_n)^T$ a vector of the roots and by $\sigma(\vec\Omega)=(\sigma(r_1),\ldots,\sigma(r_n))^T$ a permutation of $\vec\Omega$ with $\sigma\in\Sigma$. 
\begin{definition}\label{galoislike}
\emph{A Galois-like group} of the polynomial $f$ refers to any one of  the following groups:

(i) $\mathcal{G}_f=\{\sigma\in\Sigma\;|\;\forall v\in\mathbb{Z}^n,\;{\vec\Omega}^v=1\Rightarrow \sigma(\vec\Omega)^v=1\}$;

(ii) $\mathcal{G}_f^B=\{\sigma\in\Sigma\;|\;\forall v\in\mathbb{Z}^n,\;{\vec\Omega}^v\in\bbbq\Rightarrow \sigma(\vec\Omega)^v={\vec\Omega}^v\}$;

(iii) $\mathcal{G}_f^\bbbq=\{\sigma\in\Sigma\;|\;\forall v\in\mathbb{Z}^n,\;{\vec\Omega}^v\in\bbbq\Rightarrow \sigma(\vec\Omega)^v\in\bbbq\}$;
\end{definition}
To verify the terms used above in the definition, we need to prove that any subset of $\Sigma$ defined in Definition \ref{galoislike} is indeed a group:
\begin{proposition}\label{isgroup}
Suppose that $f\in\bbbq[x]$ $(f(0)\neq0)$ is a polynomial with no multiple roots. Set $\mathcal{G}=\mathcal{G}_f,\mathcal{G}_f^B$ or $\mathcal{G}_f^\bbbq$, then $\mathcal{G}$ is a subgroup of $\,\Sigma$.
\end{proposition}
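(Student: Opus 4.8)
The plan is to verify the three group axioms directly for each of the three cases, noting that they are essentially the same argument. The three defining conditions all have the shape ``every multiplicative relation (or $\bbbq$-relation) of $\vec\Omega$ is respected (or preserved, or sent into $\bbbq$) by $\sigma$,'' and the key observation is that $\sigma$ acts on exponent vectors in a way compatible with this shape. First I would fix notation: for $\sigma\in\Sigma$ and $v\in\bbbz^n$, write $\sigma^{-1}v$ for the vector whose $i$-th coordinate is $v(\sigma^{-1}(i))$, so that $\sigma(\vec\Omega)^v=\prod_i\sigma(r_i)^{v(i)}=\prod_j r_j^{v(\sigma^{-1}(j))}={\vec\Omega}^{\,\sigma^{-1}v}$; this identity is the engine of the whole proof and should be stated once at the outset.

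Next I would check closure under composition. Take $\sigma,\tau\in\mathcal{G}_f$ and $v\in\bbbz^n$ with ${\vec\Omega}^v=1$. Since $\tau\in\mathcal{G}_f$, $\tau(\vec\Omega)^v=1$, i.e. ${\vec\Omega}^{\,\tau^{-1}v}=1$. Applying $\sigma\in\mathcal{G}_f$ to this relation gives $\sigma(\vec\Omega)^{\tau^{-1}v}=1$, which by the identity above is ${\vec\Omega}^{\,\sigma^{-1}\tau^{-1}v}={\vec\Omega}^{\,(\sigma\tau)^{-1}v}=(\sigma\tau)(\vec\Omega)^v=1$. Hence $\sigma\tau\in\mathcal{G}_f$. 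The identity permutation is trivially in $\mathcal{G}$ in all three cases. For inverses, since $\Sigma$ is finite, $\mathcal{G}$ being a nonempty subset closed under multiplication is automatically a subgroup; alternatively one can argue directly that if $\sigma\in\mathcal{G}_f$ then for a relation ${\vec\Omega}^v=1$ one has ${\vec\Omega}^{\,\sigma v}=\sigma^{-1}(\vec\Omega)^v$ and, because $\sigma$ maps the (full) set of relations into itself injectively, it maps it onto itself, so $\sigma^{-1}$ also preserves relations. I would then remark that the cases $\mathcal{G}_f^\bbbq$ and $\mathcal{G}_f^B$ are handled verbatim: for $\mathcal{G}_f^\bbbq$ replace ``$=1$'' by ``$\in\bbbq$'' throughout (using that $\bbbq$ is closed under multiplication and that ${\vec\Omega}^v\in\bbbq$ iff ${\vec\Omega}^{\,\sigma^{-1}v}\in\bbbq$ for $\sigma$ ranging over the group); for $\mathcal{G}_f^B$ the condition is that $\sigma$ fixes ${\vec\Omega}^v$ whenever the latter is rational, and the composition computation $\sigma(\tau(\vec\Omega)^v)=\sigma({\vec\Omega}^{\,\tau^{-1}v})={\vec\Omega}^{\,\tau^{-1}v}={\vec\Omega}^v$ goes through once one observes that $\tau^{-1}v$ is again an exponent vector landing ${\vec\Omega}$ in $\bbbq$ (indeed on the same rational value).

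The only genuinely delicate point — the ``main obstacle'' — is the inverse axiom, and specifically making rigorous the claim that $\sigma$ permutes the set of relations bijectively rather than merely mapping it into itself. For $\mathcal{G}_f$ this is clean: the set $\mathcal{R}_f=\{v:\,{\vec\Omega}^v=1\}$ is a subgroup of $\bbbz^n$, the map $v\mapsto\sigma v$ is a $\bbbz$-linear automorphism of $\bbbz^n$, and $\sigma\in\mathcal{G}_f$ says $\sigma^{-1}\mathcal{R}_f\subseteq\mathcal{R}_f$ — but since $\mathcal{R}_f$ need not have finite index one cannot immediately conclude equality from a cardinality count. The fix is to use finiteness of $\Sigma$ instead: the semigroup generated by $\sigma$ is finite, so $\sigma^m=\mathrm{id}$ for some $m\ge1$, whence $\sigma^{-1}=\sigma^{m-1}$ lies in $\mathcal{G}_f$ by the already-established closure under composition. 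This observation simultaneously disposes of the inverse axiom in all three cases and is the cleanest route; I would present the closure-under-composition step carefully and then invoke finiteness of $\Sigma$ to finish, remarking that this is why $\mathcal{G}$, being a nonempty multiplicatively closed subset of the finite group $\Sigma$, is a subgroup.
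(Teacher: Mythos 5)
Your proof is correct, and for the one genuinely delicate axiom---closure under inverses---it takes a different route from the paper. Both arguments run on the same engine, the identity $\sigma(\vec\Omega)^{v}=\vec\Omega^{\,w}$ where $w$ is the coordinate-permuted exponent vector (the paper's $\hat\sigma^{-1}(v)$, your $\sigma^{-1}v$; the two notations agree), and both treat closure under composition as the easy part. Where you diverge is exactly at the gap you correctly flag: knowing only that each $\sigma\in\mathcal{G}$ maps the relation lattice $\mathcal{R}$ into itself does not immediately give surjectivity, since $\mathcal{R}$ may be infinite. The paper closes this gap with a lattice argument: $\hat\sigma^{-1}(\mathcal{R})$ is a sublattice of $\mathcal{R}$ of the same rank (as the image of a basis under a nonsingular linear map), and since $\hat\sigma^{-1}$ is orthogonal it preserves the lattice volume, forcing $\hat\sigma^{-1}(\mathcal{R})=\mathcal{R}$ and hence $\sigma^{-1}\in\mathcal{G}$ (with a separate short computation for $\mathcal{G}_f^B$ to see that $\sigma^{-1}$ not only preserves $\mathcal{R}_f^\bbbq$ but fixes the rational values). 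Your replacement---$\Sigma$ is finite, so $\sigma^m=\mathrm{id}$ for some $m\geq1$ and $\sigma^{-1}=\sigma^{m-1}\in\mathcal{G}$ once closure under composition is in hand, i.e.\ a nonempty multiplicatively closed subset of a finite group is a subgroup---is cleaner and more uniform: it needs no appeal to rank or covolume and disposes of all three cases at once, including $\mathcal{G}_f^B$, where your observation that $\tau^{-1}v$ lands $\vec\Omega$ on the \emph{same} rational value is the right thing to check. What the paper's longer argument buys is the stronger intermediate fact $\hat\sigma(\mathcal{R})=\mathcal{R}$, i.e.\ that $\mathcal{G}$ genuinely permutes the relation lattice, which is the property implicitly reused later when $\bbbq\otimes\mathcal{R}$ is treated as a $\bbbq[\mathcal{G}]$-module; your proof trades that by-product for brevity, though it is easily recovered afterwards once $\mathcal{G}$ is known to be a group.
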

\begin{proof}
Any $\sigma\in\Sigma$ results in a coordinate permutation $\hat{\sigma}$ operating on the space $\mathbb{C}^n$ with $n=|\Omega|$ in a manner so that for any vector $v=(c_1,\ldots,c_n)^T\in\mathbb{C}^n$, $\hat\sigma(v)=(b_1,\ldots,b_n)^T$ with $b_i=c_j$ whenever $\sigma(r_i)=r_j$. Then one observes that the equalities $\widehat{\sigma^{-1}}=(\hat\sigma)^{-1}$, $\sigma(\vec\Omega)^{\hat\sigma(v)}=\vec\Omega^v$ and $\vec\Omega^{\hat\sigma(v)}=\sigma^{-1}(\vec\Omega)^v$ hold for any $\sigma\in\Sigma$ and any $v\in\mathbb{Z}^n$.

Set $\mathcal{G}=\mathcal{G}_f$ ($\mathcal{G}_f^B$ or $\mathcal{G}_f^\bbbq$ respectively) and $\mathcal{R}=\mathcal{R}_f$ ($\mathcal{R}_f^\bbbq$ respectively). Then, by definition, $\hat\sigma^{-1}(v)\in\mathcal{R}$ for any $\sigma\in\mathcal{G}$ and any $v\in\mathcal{R}$. Hence the set $\hat\sigma^{-1}(\mathcal{R})=\{\hat\sigma^{-1}(v)\;|\;v\in\mathcal{R}\}$ is a subset of the lattice $\mathcal{R}$. Noting that $\hat\sigma^{-1}$ operates linearly, one concludes that $\hat\sigma^{-1}(\mathcal{R})$ is also a lattice. Thus $\hat\sigma^{-1}(\mathcal{R})$ is a sub-lattice of $\mathcal{R}$. Since $\hat\sigma^{-1}$ is linear and non-singular, any basis of $\mathcal{R}$ is transformed into a basis of $\hat\sigma^{-1}(\mathcal{R})$ by $\hat\sigma^{-1}$. Hence rank$(\mathcal{R})=$\,rank$(\hat\sigma^{-1}(\mathcal{R}))$. Since $\hat\sigma^{-1}$ is orthogonal on the space $\mathbb{R}^n$ and orthogonal operations preserve the lattice volume, $\mathcal{R}=\hat\sigma^{-1}(\mathcal{R})$. Thus $\hat\sigma(\mathcal{R})=\mathcal{R}$. 

So $\hat\sigma(v)\in\mathcal{R}$ for any $\sigma\in\mathcal{G}$ and any $v\in\mathcal{R}$. If $\mathcal{G}=\mathcal{G}_f$ (or $\mathcal{G}_f^\bbbq$) and $\mathcal{R}=\mathcal{R}_f$ (or $\mathcal{R}_f^\bbbq$ respectively), then $\vec\Omega^{\hat\sigma(v)}=1$ (or $\vec\Omega^{\hat\sigma(v)}\in\bbbq$ respectively). Equivalently, $\sigma^{-1}(\vec\Omega)^v=1$ (or $\sigma^{-1}(\vec\Omega)^v\in\bbbq$). Hence $\sigma^{-1}\in\mathcal{G}$ for any $\sigma\in\mathcal{G}$. Now suppose that $\mathcal{G}=\mathcal{G}_f^B$ and $\mathcal{R}=\mathcal{R}_f^\bbbq$. Since $\hat\sigma(v)\in\mathcal{R}_f^\bbbq$ and $\vec\Omega^{\hat\sigma(v)}\in\bbbq$, $\sigma(\vec\Omega)^{\hat\sigma(v)}=\vec\Omega^{\hat\sigma(v)}$ follows from the definition of $\mathcal{G}_f^B$. The left side of this equality equals $\vec\Omega^v$ while its right side equals $\sigma^{-1}(\vec\Omega)^v$. Hence $\sigma^{-1}(\vec\Omega)^v=\vec\Omega^v$ for any $\sigma\in\mathcal{G}_f^B$ and $v\in\mathcal{R}_f^\bbbq$. Thus $\sigma^{-1}\in\mathcal{G}_f^B$.

The closure of the multiplication in the subset $\mathcal{G}$ of $\Sigma$ and the fact that $1\in\mathcal{G}$ are straightforward. Thus $\mathcal{G}$ is a group. \qed
\end{proof}

Define groups $\langle\Omega\rangle=\{\vec\Omega^v\;|\;v\in\mathbb{Z}^n\}$ and $\langle\Omega\rangle_{\bbbq}=\{c\vec\Omega^v\;|\;c\in\bbbq^*,\;v\in\mathbb{Z}^n\}$. The following proposition asserts that the Galois-like groups $\mathcal{G}_f$ and $\mathcal{G}_f^B$ of a polynomial $f$ are subgroups of the automorphism groups of $\langle\Omega\rangle$ and $\langle\Omega\rangle_\bbbq$ respectively.
\begin{proposition}\label{auto}
The following relations hold:

(i) $\mathcal{G}_f\simeq\{\eta\in$\,Aut$\,(\langle\Omega\rangle)\;|\;\forall r_i\in\Omega,\;\eta(r_i)\in\Omega\}$;

(ii) $\mathcal{G}_f^B\simeq\big\{\eta\in$\,Aut$\,(\langle\Omega\rangle_\bbbq)\;\big|\;\forall r_i\in\Omega,\eta(r_i)\in\Omega;\,\eta|_{\bbbq^*}=id_{\bbbq^*}\big\}$.
\end{proposition}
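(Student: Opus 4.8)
The plan is to construct mutually inverse maps between $\mathcal{G}_f$ (resp. $\mathcal{G}_f^B$) and the indicated subgroup of automorphisms, using the same coordinate-permutation bookkeeping already set up in the proof of Proposition \ref{isgroup}. For part (i): given $\sigma\in\mathcal{G}_f$, I would define $\eta_\sigma\colon\langle\Omega\rangle\to\langle\Omega\rangle$ by $\eta_\sigma(\vec\Omega^v)=\sigma(\vec\Omega)^v$ for $v\in\bbbz^n$. First I must check this is \emph{well-defined}: if $\vec\Omega^{v}=\vec\Omega^{w}$ then $v-w\in\mathcal{R}_f$, so by the defining property of $\mathcal{G}_f$ we get $\sigma(\vec\Omega)^{v-w}=1$, i.e. $\sigma(\vec\Omega)^v=\sigma(\vec\Omega)^w$. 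Then $\eta_\sigma$ is a group homomorphism since both sides are multiplicative in $v$, it is surjective onto $\langle\Omega\rangle$ because $\sigma$ permutes $\Omega$ (so $\sigma(\vec\Omega)^v$ runs over the same group $\langle\Omega\rangle=\langle\sigma(r_1),\dots,\sigma(r_n)\rangle$), and it sends each generator $r_i=\vec\Omega^{e_i}$ to $\sigma(r_i)\in\Omega$. Injectivity on the finitely generated group follows because $\eta_{\sigma^{-1}}$ (which exists since $\sigma^{-1}\in\mathcal{G}_f$ by Proposition \ref{isgroup}) is a two-sided inverse: $\eta_{\sigma^{-1}}\circ\eta_\sigma(\vec\Omega^v)=\eta_{\sigma^{-1}}(\sigma(\vec\Omega)^v)$, and here one uses the identity $\sigma(\vec\Omega)^v=\vec\Omega^{\widehat{\sigma}(v)}$ from the earlier proof to rewrite this as $\eta_{\sigma^{-1}}(\vec\Omega^{\widehat\sigma(v)})=\sigma^{-1}(\vec\Omega)^{\widehat\sigma(v)}=\vec\Omega^{v}$, using $\widehat{\sigma^{-1}}=(\widehat\sigma)^{-1}$ and $\vec\Omega^{\widehat\sigma(v)}=\sigma^{-1}(\vec\Omega)^v$. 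So $\eta_\sigma\in\mathrm{Aut}(\langle\Omega\rangle)$ and maps $\Omega$ into $\Omega$.

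Conversely, given $\eta$ in the right-hand side set, the condition $\eta(r_i)\in\Omega$ for all $i$, together with $\eta$ being an automorphism, forces $\eta|_\Omega$ to be a bijection of the finite set $\Omega$ (an injective self-map of a finite set), hence a permutation $\sigma_\eta\in\Sigma$; I would check $\sigma_\eta\in\mathcal{G}_f$ directly: if $\vec\Omega^v=1$ then applying $\eta$ gives $\prod_i\eta(r_i)^{v(i)}=\sigma_\eta(\vec\Omega)^v=\eta(1)=1$. The assignments $\sigma\mapsto\eta_\sigma$ and $\eta\mapsto\sigma_\eta$ are inverse to each other (both restrict to the identity on $\Omega$, which determines everything), and $\sigma\mapsto\eta_\sigma$ is a homomorphism because $\eta_{\sigma\tau}$ and $\eta_\sigma\circ\eta_\tau$ agree on the generators $r_i$. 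This gives the group isomorphism in (i).

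For part (ii) the structure is parallel, working with $\langle\Omega\rangle_\bbbq$ instead: for $\sigma\in\mathcal{G}_f^B$ define $\eta_\sigma(c\,\vec\Omega^v)=c\,\sigma(\vec\Omega)^v$ for $c\in\bbbq^*$, $v\in\bbbz^n$. Well-definedness now needs: if $c_1\vec\Omega^{v}=c_2\vec\Omega^{w}$ then $\vec\Omega^{v-w}=c_2/c_1\in\bbbq$, so $v-w\in\mathcal{R}_f^\bbbq$ and the defining property of $\mathcal{G}_f^B$ gives $\sigma(\vec\Omega)^{v-w}=\vec\Omega^{v-w}=c_2/c_1$, whence $c_1\sigma(\vec\Omega)^v=c_2\sigma(\vec\Omega)^w$. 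By construction $\eta_\sigma$ fixes $\bbbq^*$ pointwise (take $v=0$), is a homomorphism, sends $r_i$ to $\sigma(r_i)\in\Omega$, and has $\eta_{\sigma^{-1}}$ as inverse exactly as before (using $\sigma^{-1}\in\mathcal{G}_f^B$ from Proposition \ref{isgroup} and the identities $\sigma(\vec\Omega)^v=\vec\Omega^{\widehat\sigma(v)}$, $\vec\Omega^{\widehat\sigma(v)}=\sigma^{-1}(\vec\Omega)^v$). For the reverse direction, from $\eta$ with $\eta(r_i)\in\Omega$ one again extracts a permutation $\sigma_\eta$ of $\Omega$; to see $\sigma_\eta\in\mathcal{G}_f^B$, suppose $\vec\Omega^v=q\in\bbbq^*$, apply $\eta$ and use $\eta|_{\bbbq^*}=\mathrm{id}$ to get $\sigma_\eta(\vec\Omega)^v=\eta(q)=q=\vec\Omega^v$. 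The two maps are mutually inverse and homomorphic for the same reason as in (i).

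I do not expect a serious obstacle here; the content is essentially organizational. The one point requiring a little care — and the place where the precise form of the definitions matters — is checking well-definedness of $\eta_\sigma$, since an element of $\langle\Omega\rangle$ (or $\langle\Omega\rangle_\bbbq$) has many representations $\vec\Omega^v$ (or $c\vec\Omega^v$), and it is exactly the defining implication of $\mathcal{G}_f$ (resp. the stronger equality $\sigma(\vec\Omega)^v=\vec\Omega^v$ in the definition of $\mathcal{G}_f^B$, rather than merely $\sigma(\vec\Omega)^v\in\bbbq$) that makes $\eta_\sigma$ a well-defined function and, in the $B$ case, one that fixes $\bbbq^*$. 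This also explains why $\mathcal{G}_f^\bbbq$ is \emph{not} claimed to embed the same way: its weaker defining property does not pin down $\sigma(\vec\Omega)^v$, only its rationality, so the analogous $\eta_\sigma$ need not be well-defined. The rest — homomorphism property, bijectivity via the explicit inverse $\eta_{\sigma^{-1}}$, and compatibility of the correspondence with the group laws — is routine verification on generators.
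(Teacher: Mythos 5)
Your proposal is correct and follows essentially the same route as the paper: define $\eta_\sigma$ on $\langle\Omega\rangle$ (resp.\ $\langle\Omega\rangle_\bbbq$) by $\vec\Omega^v\mapsto\sigma(\vec\Omega)^v$, verify well-definedness from the defining property of $\mathcal{G}_f$ (resp.\ $\mathcal{G}_f^B$), and invert via restriction to the finite set $\Omega$. Your extra remarks (the explicit inverse $\eta_{\sigma^{-1}}$ and the explanation of why $\mathcal{G}_f^\bbbq$ is excluded) only elaborate on points the paper leaves as routine.
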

\begin{proof}
Denote by Aut$_\Omega(\langle\Omega\rangle)$ the group in the right side of the formula in (i) and by Aut$_\Omega^\bbbq(\langle\Omega\rangle_\bbbq)$ the one in the right side of the formula in (ii). 

Set $\mathcal{G}=\mathcal{G}_f$ (or $\mathcal{G}_f^B$) and $A=$\,Aut$_\Omega(\langle\Omega\rangle)$ (or Aut$_\Omega^\bbbq(\langle\Omega\rangle_\bbbq)$ respectively). For any $\sigma\in\mathcal{G}$, we define an element $E_\sigma$ in $A$ in the following way: for any $\vec\Omega^v\in\langle\Omega\rangle$, $E_\sigma(\vec\Omega^v)=\sigma(\vec\Omega)^v$ (or, for any $c\vec\Omega^v\in\langle\Omega\rangle_\bbbq$, $E_\sigma(c\vec\Omega^v)=c\sigma(\vec\Omega)^v$). From the definition of $\mathcal{G}$, we see that $\sigma(\vec\Omega)^v=\sigma(\vec\Omega)^{v'}$ whenever $\vec\Omega^v=\vec\Omega^{v'}\in\langle\Omega\rangle$ (or that $c_1\sigma(\vec\Omega)^v=c_2\sigma(\vec\Omega)^{v'}$ whenever $c_1\vec\Omega^v=c_2\vec\Omega^{v'}\in\langle\Omega\rangle_\bbbq$). Thus the map $E_\sigma$: $\langle\Omega\rangle\rightarrow\langle\Omega\rangle$ (or $E_\sigma$: $\langle\Omega\rangle_\bbbq\rightarrow\langle\Omega\rangle_\bbbq$) is well defined. It is trivial to verify the fact that $E_\sigma$ is an automorphism of $\langle\Omega\rangle$ (or of $\langle\Omega\rangle_\bbbq$) and the property that for all $r_i\in\Omega$, $E_\sigma(r_i)=\sigma(r_i)\in\Omega$ (or, moreover, $E_\sigma(c)=c$ for any $c\in\bbbq^*$). Hence $E_\sigma$ is indeed in the set $A$. Thus $E_{_{^{\,\bullet}}}\,$is a map from $\mathcal{G}$ to $A$.

For any $\eta\in A$, $\eta$ is injective and $\eta(\Omega)\subset\Omega$. Since $\Omega$ is finite, $\eta(\Omega)=\Omega$. Hence $\eta|_\Omega\in\Sigma$. Because $\eta$ is an automorphism (or an automorphism fixing every rational number), $\eta|_\Omega(\vec\Omega)^v=1$ whenever $\vec\Omega^v=1$ (or $\eta|_\Omega(\vec\Omega)^v=\vec\Omega^v$ whenever $\vec\Omega^v\in\bbbq$). Thus $\eta|_\Omega\in\mathcal{G}$. Define $R_\eta=\eta|_\Omega$, then $R_{_{^{\,\bullet}}}\,$is a map from $A$ to $\mathcal{G}$.

It is clear that both the maps $E_{_{^{\,\bullet}}}\,$and $R_{_{^{\,\bullet}}}$are group homomorphisms. That is, $E_{\sigma_1\sigma_2}=E_{\sigma_1}E_{\sigma_2}$ and $R_{\eta_1\eta_2}=R_{\eta_1}R_{\eta_2}$ for any $\sigma_1,\sigma_2\in\mathcal{G}$ and any $\eta_1,\eta_2\in A$. One also verifies easily that $E_{_{R_{_{^{\,\bullet}}}}}$$=id_A$ and $R_{_{E_{_{^{\,\bullet}}}}}$$=id_{\mathcal{G}}$. Hence $\mathcal{G}\simeq A$.\qed
\end{proof}

By definition, a Galois-like group of a polynomial $f$ is the group of the permutations between its roots that preserve all the  multiplicative relations between them. Since any element in the Galois group of $f$ preserves all polynomial relations between the roots, the following relations between the Galois group and a Galois-like group of $f$ is straightforward:
\begin{proposition}\label{subg}
Suppose that $f\in\bbbq[x]$ $(f(0)\neq0)$ has no multiple roots. Then, regarded as a permutation group operating on the roots of $f$, the Galois group of $f$ is a subgroup of any Galois-like group of $f$.
\end{proposition}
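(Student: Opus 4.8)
The statement to prove is Proposition~\ref{subg}: the Galois group $G_f$ of $f$, viewed as a permutation group on $\Omega$, is contained in each of $\mathcal{G}_f$, $\mathcal{G}_f^B$ and $\mathcal{G}_f^\bbbq$. The plan is to show that every $\sigma\in G_f$ satisfies the defining closure condition of each Galois-like group, using only the fact that a field automorphism commutes with multiplication and fixes $\bbbq$ pointwise. First I would fix $\sigma\in G_f$, regarded as an automorphism of the splitting field $L$ of $f$ over $\bbbq$ that permutes $\Omega=\{r_1,\dots,r_n\}$; by the usual identification, the permutation it induces on $\Omega$ is exactly the permutation $\sigma\in\Sigma$ appearing in the definitions, and $\sigma(\vec\Omega)=(\sigma(r_1),\dots,\sigma(r_n))^T$ is the image of the root vector under this field automorphism applied coordinatewise.

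The core observation is multiplicativity: for any $v=(v_1,\dots,v_n)^T\in\bbbz^n$ with $\vec\Omega^v=\prod_i r_i^{v_i}$ a well-defined element of $L^*$ (all $r_i\neq 0$ since $f(0)\neq0$), applying the field automorphism $\sigma$ gives $\sigma(\vec\Omega^v)=\prod_i \sigma(r_i)^{v_i}=\sigma(\vec\Omega)^v$. This single identity handles all three cases. For $\mathcal{G}_f$: if $\vec\Omega^v=1$, then $\sigma(\vec\Omega)^v=\sigma(\vec\Omega^v)=\sigma(1)=1$, so $\sigma\in\mathcal{G}_f$. For $\mathcal{G}_f^B$: if $\vec\Omega^v\in\bbbq$, then since $\sigma$ fixes $\bbbq$ pointwise, $\sigma(\vec\Omega)^v=\sigma(\vec\Omega^v)=\vec\Omega^v$, so $\sigma\in\mathcal{G}_f^B$. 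For $\mathcal{G}_f^\bbbq$: if $\vec\Omega^v\in\bbbq$, then $\sigma(\vec\Omega)^v=\vec\Omega^v\in\bbbq$, so $\sigma\in\mathcal{G}_f^\bbbq$ (this also follows immediately from $\mathcal{G}_f^B\subseteq\mathcal{G}_f^\bbbq$). Since $\sigma$ was arbitrary, $G_f\leq\mathcal{G}$ for each of the three Galois-like groups.

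There is essentially no obstacle here; the only point that needs a word of care is the identification of the abstract permutation group $\Sigma$ on the finite set $\Omega$ with the action of $G_f$ on the roots inside $L$, i.e.\ checking that the coordinatewise action $\sigma(\vec\Omega)$ in Definition~\ref{galoislike} agrees with applying the field automorphism to each coordinate of $\vec\Omega$. This is immediate from the definition of how $G_f$ acts on $\Omega$, and once it is in place the proof is just the one-line multiplicativity computation above applied to each of the three defining conditions. I would also note in passing (though it is not strictly needed) that Proposition~\ref{isgroup} guarantees each $\mathcal{G}$ is genuinely a group, so ``subgroup'' is meaningful, and that $G_f$ being a subgroup is consistent with the inclusions $\mathcal{G}_f\subseteq\mathcal{G}_f^\bbbq$ and $\mathcal{G}_f^B\subseteq\mathcal{G}_f^\bbbq$ that follow directly from the definitions.
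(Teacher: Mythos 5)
Your proof is correct and is exactly the argument the paper has in mind — the paper states this proposition without proof, remarking only that it is ``straightforward'' because Galois automorphisms preserve all polynomial relations among the roots, and your multiplicativity computation $\sigma(\vec\Omega^v)=\sigma(\vec\Omega)^v$ together with $\sigma|_\bbbq=\mathrm{id}$ is precisely the missing detail. One small correction to your closing aside (which you rightly flag as not needed): the inclusion $\mathcal{G}_f\subseteq\mathcal{G}_f^\bbbq$ does \emph{not} follow from the definitions — an element of $\mathcal{G}_f$ preserves relations $\vec\Omega^v=1$ but need not send rational-valued monomials to rational values; the paper only asserts $\mathcal{G}_f^B\leq\mathcal{G}_f$ and $\mathcal{G}_f^B\leq\mathcal{G}_f^\bbbq$.
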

Besides, the following relations between the Galois-like groups is straightforward but noteworthy:
\begin{proposition}
Let $f$ be as in Proposition \ref{subg}. Then $\mathcal{G}_f^B\leq\mathcal{G}_f\text{ and }\,\mathcal{G}_f^B\leq\mathcal{G}_f^\bbbq$.
\end{proposition}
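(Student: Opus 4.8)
The plan is to verify the two inclusions $\mathcal{G}_f^B\leq\mathcal{G}_f$ and $\mathcal{G}_f^B\leq\mathcal{G}_f^\bbbq$ directly from Definition \ref{galoislike}, since both amount to observing that the defining condition of $\mathcal{G}_f^B$ is formally stronger than each of the other two. Throughout I would keep in mind that $\mathcal{R}_f=\{v\in\bbbz^n\mid\vec\Omega^v=1\}$ is a sublattice of $\mathcal{R}_f^\bbbq=\{v\in\bbbz^n\mid\vec\Omega^v\in\bbbq\}$, because $1\in\bbbq$; this is the only structural fact needed.

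For $\mathcal{G}_f^B\leq\mathcal{G}_f$: take $\sigma\in\mathcal{G}_f^B$ and any $v\in\bbbz^n$ with $\vec\Omega^v=1$. Then in particular $\vec\Omega^v\in\bbbq$, so by the defining property of $\mathcal{G}_f^B$ we get $\sigma(\vec\Omega)^v=\vec\Omega^v=1$. Hence $\sigma$ satisfies the defining condition of $\mathcal{G}_f$, i.e. $\sigma\in\mathcal{G}_f$. For $\mathcal{G}_f^B\leq\mathcal{G}_f^\bbbq$: take $\sigma\in\mathcal{G}_f^B$ and any $v\in\bbbz^n$ with $\vec\Omega^v\in\bbbq$. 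Then $\sigma(\vec\Omega)^v=\vec\Omega^v\in\bbbq$, so in particular $\sigma(\vec\Omega)^v\in\bbbq$, which is exactly the defining condition for $\sigma\in\mathcal{G}_f^\bbbq$. Since both $\mathcal{G}_f^B,\mathcal{G}_f,\mathcal{G}_f^\bbbq$ have already been shown to be subgroups of $\Sigma$ (Proposition \ref{isgroup}), the set-theoretic containments $\mathcal{G}_f^B\subseteq\mathcal{G}_f$ and $\mathcal{G}_f^B\subseteq\mathcal{G}_f^\bbbq$ upgrade to subgroup relations $\mathcal{G}_f^B\leq\mathcal{G}_f$ and $\mathcal{G}_f^B\leq\mathcal{G}_f^\bbbq$ with no further work.

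I do not expect any genuine obstacle here: as the paper itself remarks, the statement is "straightforward." The only point that deserves a word of care is making sure the quantifier direction is used correctly — the condition defining $\mathcal{G}_f^B$ is an implication whose hypothesis ($\vec\Omega^v\in\bbbq$) is weaker than the hypothesis defining $\mathcal{G}_f$ ($\vec\Omega^v=1$) and whose conclusion ($\sigma(\vec\Omega)^v=\vec\Omega^v$) is stronger than the conclusion defining $\mathcal{G}_f^\bbbq$ ($\sigma(\vec\Omega)^v\in\bbbq$), so in both comparisons the $\mathcal{G}_f^B$-condition implies the other. That is the whole content of the proof.
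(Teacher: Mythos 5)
Your proof is correct and is exactly the "straightforward" verification the paper has in mind (the paper omits the proof entirely): the defining implication for $\mathcal{G}_f^B$ has a weaker hypothesis than that of $\mathcal{G}_f$ and a stronger conclusion than that of $\mathcal{G}_f^\bbbq$, and Proposition \ref{isgroup} upgrades the containments to subgroup relations. Nothing further is needed.
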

\subsection{Generalization of Sufficient Conditions of Exponent Lattice Triviality}\label{3.2}
With the help of the concept of Galois-like groups, we can generalize many sufficient conditions that implying triviality of exponent lattices.
\begin{lemma}\label{gf2tran}
Set $f\in\bbbq[x]$ $(f(0)\neq0)$ to be a polynomial without multiple roots. Denote by $\vec\Omega=(\alpha_1,\ldots,\alpha_s,\gamma_1,\ldots,\gamma_t)^T$ the vector of all the roots of $f$ with $\alpha_i$ the roots that are not roots of rational. Suppose that the Galois-like group $\mathcal{G}_f$ is doubly transitive, then any multiplicative relation $v\in\mathcal{R}_{\vec\Omega}=\mathcal{R}_f$ satisfies the following condition:
\begin{equation}\label{nonrorkave}
v(1)=\cdots=v(s)=\frac{v(1)+\cdots+v(s+t)}{s+t}.
\end{equation}
\end{lemma}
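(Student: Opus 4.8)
The plan is to work with the multiplicative-group reformulation of $\mathcal{G}_f$ supplied by Proposition \ref{auto}(i), together with the fact (Proposition \ref{isgroup}) that $\mathcal{G}_f$ genuinely acts on $\mathcal{R}_f$ by coordinate permutation and preserves it, i.e. $\hat\sigma(\mathcal{R}_f)=\mathcal{R}_f$ for every $\sigma\in\mathcal{G}_f$. Fix a relation $v=(v(1),\dots,v(s+t))^T\in\mathcal{R}_f$. First I would show that the value $v(i)$ is constant over the indices $i\in\{1,\dots,s\}$ corresponding to the roots $\alpha_i$ that are not roots of rational. The idea is that double transitivity of $\mathcal{G}_f$ lets me, for any two such indices $i\neq j$, pick $\sigma\in\mathcal{G}_f$ with $\sigma(\alpha_i)=\alpha_j$ and $\sigma$ fixing a third chosen root; applying $\hat\sigma$ to $v$ produces another relation $v'\in\mathcal{R}_f$, and subtracting gives a relation $v-v'$ supported essentially on $\{\alpha_i,\alpha_j\}$ (plus possibly the fixed coordinate). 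The key point is that a relation whose support, up to the rational roots, is concentrated on two non-ror roots forces, via the non-ror hypothesis and the structure of $\langle\Omega\rangle$, the two exponents to coincide; this is where I expect to lean on an argument of the same flavour as Theorem 3.2 of \cite{zheng2019computing} or the $\mathcal{R}_f^\bbbq$-triviality reasoning already cited in the proof of Proposition \ref{2homoqtri}.

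Second, having $v(1)=\cdots=v(s)=:c$, I would derive the averaging identity. Write $N=s+t$ and $S=v(1)+\cdots+v(N)$; I must show $c=S/N$. For this I would sum $\hat\sigma(v)$ over all $\sigma\in\mathcal{G}_f$ (or over a transitive set of them): since $\mathcal{G}_f$ acts transitively on the $N$ roots, each coordinate of the averaged vector $\frac1{|\mathcal{G}_f|}\sum_\sigma \hat\sigma(v)$ equals $S/N$, so the constant vector $(S/N,\dots,S/N)^T$ lies in $\mathcal{R}_f\otimes\bbbq$. Equivalently $\vec\Omega^{\,(S/N)(1,\dots,1)}$ is, after clearing denominators, a root of unity — but $\vec\Omega^{(1,\dots,1)}=\pm f(0)/(\text{lead})\in\bbbq^*$, so this is automatic and says nothing by itself. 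The real content is that $v-\frac{S}{N}(1,\dots,1)^T$ is again (a rational multiple of) a relation, hence a genuine relation after scaling, with coordinate sum zero; restricting attention to the first $s$ coordinates, which are all equal to $c-S/N$, and invoking once more that the $\alpha_i$ are not roots of rational (so a relation cannot be supported purely on non-ror roots with all equal nonzero exponents unless that exponent is forced by the ambient rank considerations) pins down $c-S/N=0$.

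The main obstacle I anticipate is making the two-index reduction in the first step fully rigorous: when I form $v-\hat\sigma(v)$ the fixed third coordinate may or may not cancel, and I need to guarantee the resulting relation really is ``about'' the pair $\{\alpha_i,\alpha_j\}$ in a way that the non-ror hypothesis can bite. I would handle this by choosing, via double transitivity, $\sigma$ that swaps $\alpha_i\leftrightarrow\alpha_j$ (not merely sends one to the other), so that $v-\hat\sigma(v)$ has the form $(v(i)-v(j))(e_i-e_j)$ supported exactly on those two coordinates; then $(v(i)-v(j))(e_i-e_j)\in\mathcal{R}_f$ means $(\alpha_i/\alpha_j)^{\,v(i)-v(j)}=1$, and since $\alpha_i/\alpha_j$ — if it were a root of unity, then $\alpha_i$ and $\alpha_j$ would differ multiplicatively by a root of unity, which for non-degenerate behaviour at these roots forces $v(i)=v(j)$; the borderline case where $\alpha_i/\alpha_j$ is a root of unity has to be ruled out separately, presumably using that $\alpha_i,\alpha_j$ are not themselves roots of rational. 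A symmetric transposition may not always exist in $\mathcal{G}_f$, but a $2$-transitive group on $\geq 2$ points always contains, for each ordered pair, an element acting as a transposition on that pair after composing with a suitable stabiliser element — I would spell this out or, alternatively, argue directly with the averaging operator over the full group, which sidesteps the need for an explicit transposition. The remaining steps are routine linear algebra over $\bbbq$ inside the $\mathcal{G}_f$-stable lattice $\mathcal{R}_f$.
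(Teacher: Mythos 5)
The paper does not actually write out a proof of this lemma; it defers to Theorem 3 of \cite{baron1995polynomial}, whose argument averages a relation over a \emph{point stabilizer}. That device is precisely what your proposal is missing, and as written both of your steps have genuine gaps. Your first step needs an element of $\mathcal{G}_f$ that swaps $\alpha_i$ and $\alpha_j$ \emph{and fixes every other root}, i.e.\ a transposition. Double transitivity does not supply one: $A_n$ ($n\ge4$) and $PSL_2(q)$ are doubly transitive and contain no transpositions, and composing with stabilizer elements cannot defeat the parity obstruction. An element merely mapping the ordered pair $(\alpha_i,\alpha_j)$ to $(\alpha_j,\alpha_i)$ moves the other $N-3$ roots, so $v-\hat\sigma(v)$ is not supported on $\{i,j\}$; and your fallback of averaging over the whole group yields only the constant vector $\frac{S}{N}(1,\dots,1)^T\in\bbbq\otimes\mathcal{R}_f$ (with $S=\sum_k v(k)$, $N=s+t$), which carries no coordinate-wise information. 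The ``borderline case'' is also not removable the way you suggest: two non-roots-of-rational can perfectly well have a ratio that is a root of unity (e.g.\ $\pm(1+\sqrt{2})$). Your second step fails even if one grants $v(1)=\cdots=v(s)=c$: clearing denominators in $v-\frac{S}{N}(1,\dots,1)^T$ only shows that $\bigl(\prod_{i\le s}\alpha_i\bigr)^{M(c-S/N)}$ times a product of powers of the $\gamma_j$ equals $1$; since every $\gamma_j$ is a root of rational and $\prod_{k}r_k=\pm f(0)/\mathrm{lc}(f)\in\bbbq^*$, the product $\prod_{i\le s}\alpha_i$ is itself a root of rational, so no contradiction arises and $c-S/N$ is not pinned down.

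The missing idea, which is the whole proof: for each $i\le s$ let $H_i\le\mathcal{G}_f$ be the stabilizer of $\alpha_i$. By double transitivity $H_i$ acts transitively on the remaining $N-1$ roots, so (using $\hat\sigma(\mathcal{R}_f)=\mathcal{R}_f$ from Proposition \ref{isgroup})
\[
\frac{1}{|H_i|}\sum_{\sigma\in H_i}\hat\sigma(v)\;=\;\bigl(v(i)-c'\bigr)e_i+c'(1,\dots,1)^T\in\bbbq\otimes\mathcal{R}_f,
\qquad c'=\frac{S-v(i)}{N-1}.
\]
Since $\vec\Omega^{(1,\dots,1)}=\pm f(0)/\mathrm{lc}(f)\in\bbbq^*$ --- the very fact you noticed and then discarded as ``saying nothing'' --- clearing denominators gives $\alpha_i^{M(v(i)-c')}\in\bbbq$ for some integer $M>0$, and $\alpha_i$ not being a root of rational forces $v(i)=c'$, i.e.\ $v(i)=S/N$. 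Running this for every $i\le s$ yields $(\ref{nonrorkave})$ in one stroke, with no need for the preliminary pairwise-equality step at all.
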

\begin{lemma}\label{gfq2tran}
Let $f$ and $\vec\Omega$ be as in Lemma \ref{gf2tran}. Suppose that the Galois-like group $\mathcal{G}_f^B$ or $\mathcal{G}_f^\bbbq$ is doubly transitive, then any multiplicative relation $v\in\mathcal{R}_{\vec\Omega}^\bbbq=\mathcal{R}_f^\bbbq$ satisfies the condition $(\ref{nonrorkave})$.
\end{lemma}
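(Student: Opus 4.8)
The plan is to reduce Lemma \ref{gfq2tran} to Lemma \ref{gf2tran}, or rather to mimic its proof, by exploiting the inclusion of Galois-like groups and the structure of $\mathcal{R}_f^\bbbq$. First I would fix notation: write $\vec\Omega=(\alpha_1,\ldots,\alpha_s,\gamma_1,\ldots,\gamma_t)^T$ as in Lemma \ref{gf2tran}, let $n=s+t$, and take $v=(v(1),\ldots,v(n))^T\in\mathcal{R}_f^\bbbq$, so that $\vec\Omega^v=c$ for some $c\in\bbbq^*$. The goal is the chain of equalities $v(1)=\cdots=v(s)=\frac{v(1)+\cdots+v(n)}{n}$. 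I would treat the two hypotheses ($\mathcal{G}_f^B$ doubly transitive, or $\mathcal{G}_f^\bbbq$ doubly transitive) and note that since $\mathcal{G}_f^B\leq\mathcal{G}_f^\bbbq$, the case where $\mathcal{G}_f^B$ is doubly transitive is actually weaker as a hypothesis; wait — double transitivity of the smaller group $\mathcal{G}_f^B$ forces double transitivity of the larger $\mathcal{G}_f^\bbbq$, so it suffices to prove the lemma under the single hypothesis that $\mathcal{G}_f^\bbbq$ is doubly transitive. That is the cleanest route and I would say so explicitly.

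The key steps, in order. (1) Given $\sigma\in\mathcal{G}_f^\bbbq$, by definition $\sigma(\vec\Omega)^v\in\bbbq$, so for each pair $\sigma,\rho$ the quotient $\sigma(\vec\Omega)^v/\rho(\vec\Omega)^v$ is a nonzero rational, and in particular it is a root of unity only if it equals $\pm1$. (2) Show first that $v(i)=v(j)$ whenever both $\alpha_i,\alpha_j$ are non-roots-of-rational, i.e. $i,j\in\{1,\ldots,s\}$: pick $\sigma\in\mathcal{G}_f^\bbbq$ with $\sigma$ fixing most coordinates and swapping the positions of $\alpha_i$ and $\alpha_j$ — double transitivity gives a $\sigma$ sending the ordered pair $(r_i,r_j)$ to $(r_j,r_i)$, and more carefully one uses that the stabilizer of the remaining points acts transitively enough; then $\vec\Omega^v$ and $\sigma(\vec\Omega)^v$ are both rational, and the ratio $\sigma(\vec\Omega)^v/\vec\Omega^v$ is a power of $\alpha_j/\alpha_i$ (times the other swapped-coordinate contribution) which is a rational number, forcing $v(i)=v(j)$ because $\alpha_i/\alpha_j$ has infinite multiplicative order modulo $\bbbq^*$ — this is precisely where "not a root of rational" is used, via the fact that a non-root-of-rational cannot satisfy $(\alpha_i/\alpha_j)^k\in\bbbq$ for $k\neq0$ unless... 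I need to be careful here and instead argue with a cyclic sum as in the proof of the analogous earlier results. (3) Show $v(1)=\frac{v(1)+\cdots+v(n)}{n}$: sum the relations $\sigma(\vec\Omega)^v=c_\sigma\in\bbbq^*$ over all $\sigma$ in $\mathcal{G}_f^\bbbq$, or rather take the product $\prod_{\sigma\in\mathcal{G}_f^\bbbq}\sigma(\vec\Omega)^v$, which by transitivity equals $\big(\prod_i r_i\big)^{(\sum_i v(i))|\mathcal{G}_f^\bbbq|/n}$ up to known factors, and compare exponents on the $\alpha_i$-part; combined with step (2) this pins down the common value.

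The main obstacle I expect is step (2)/(3): turning "the ratio of two conjugate power-products is rational" into "the exponents on the non-root-of-rational coordinates are equal" requires the right combinatorial identity and the right application of double transitivity, exactly analogous to the argument behind \cite{zheng2019computing} Theorem 3.2 and Lemma \ref{gf2tran}. In fact the slick path is: Lemma \ref{gf2tran} already gives the conclusion for relations in $\mathcal{R}_f$ under double transitivity of $\mathcal{G}_f$; for $v\in\mathcal{R}_f^\bbbq$ with $\vec\Omega^v=c\in\bbbq^*$, consider two elements $\sigma,\tau\in\mathcal{G}_f^\bbbq$ and form $w=\hat\sigma(v)-\hat\tau(v)$; then $\vec\Omega^w=\sigma^{-1}(\vec\Omega)^v/\tau^{-1}(\vec\Omega)^v\in\bbbq^*$, and by varying $\sigma,\tau$ using double transitivity one produces enough such $w$ to conclude — but one still must handle that $\vec\Omega^w$ is a nonzero rational rather than $1$. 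The cleanest fix is to pass to a suitable power: some nonzero multiple $Nw$ (with $N$ bounded, coming from the finitely many rational values $c_\sigma$) can be corrected by a relation supported on roots of rational to land in $\mathcal{R}_f$, at which point Lemma \ref{gf2tran} applies to the $\alpha$-coordinates of $Nw$ and hence of $w$. I would structure the write-up so that the reader sees the reduction to Lemma \ref{gf2tran} first, and only then fills in the bookkeeping about rational constants and roots of rational; that bookkeeping, while routine, is where the care is needed and is the part I would flag as the crux.
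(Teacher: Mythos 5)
Your opening reduction is fine: since $\mathcal{G}_f^B\leq\mathcal{G}_f^\bbbq$, double transitivity of $\mathcal{G}_f^B$ forces that of $\mathcal{G}_f^\bbbq$, so one may assume $\mathcal{G}_f^\bbbq$ is doubly transitive. (Note the paper itself gives no proof here --- it defers to Theorem 3 of the Baron--Drmota--Ska{\l}ba reference --- so what matters is whether your reconstruction closes.) It does not, for three concrete reasons. First, in your step (2) the element $\sigma$ carrying $(r_i,r_j)$ to $(r_j,r_i)$ supplied by double transitivity will in general move the remaining $n-2$ roots as well, so $\sigma(\vec\Omega)^v/\vec\Omega^v$ is not a power of $\alpha_j/\alpha_i$; and the fallback claim that $\alpha_i/\alpha_j$ has infinite order modulo $\bbbq^*$ does not follow from the $\alpha$'s not being roots of rational (take $\alpha_i=1+\sqrt2$ and $\alpha_j=2+2\sqrt2$: neither is a root of rational, yet their quotient is $1/2$). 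You flag this yourself and defer to ``a cyclic sum as in the analogous earlier results'' --- but that deferred sum \emph{is} the proof, so the key step is missing. Second, your step (3) is vacuous: by transitivity $\prod_{\sigma}\sigma(\vec\Omega)^v=\bigl(\prod_i r_i\bigr)^{(\sum_i v(i))|\mathcal{G}_f^\bbbq|/n}$, and $\prod_i r_i=\pm f(0)/\mathrm{lc}(f)$ is already rational, so no constraint on $v$ results. Third, the ``slick path'' is invalid on two counts: Lemma \ref{gf2tran} assumes $\mathcal{G}_f$ is doubly transitive, which does not follow from $\mathcal{G}_f^\bbbq$ being doubly transitive (there is no inclusion $\mathcal{G}_f^\bbbq\leq\mathcal{G}_f$; only $\mathcal{G}_f^B$ sits inside both), so invoking it is circular; and the proposed correction of $Nw$ ``by a relation supported on roots of rational'' to land in $\mathcal{R}_f$ is not generally possible, since the rational value $\vec\Omega^w$ need not belong to the subgroup of $\bbbq^*$ generated by the rational powers of the $\gamma_j$.

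The argument that actually works is an averaging over a point stabilizer, using only what Proposition \ref{isgroup} gives you, namely that $\hat\sigma$ preserves $\mathcal{R}_f^\bbbq$ for $\sigma\in\mathcal{G}_f^\bbbq$. Fix $i$ and let $S=\sum_k v(k)$. Summing $\hat\sigma(v)$ over the stabilizer of $r_i$ in $\mathcal{G}_f^\bbbq$ (which is transitive on the other $n-1$ roots, by double transitivity) and rescaling shows that
\[
w:=v(i)\,e_i+\tfrac{S-v(i)}{n-1}\sum_{k\neq i}e_k\ \in\ \bbbq\otimes\mathcal{R}_f^\bbbq ,
\]
hence $\bigl(v(i)-\tfrac{S-v(i)}{n-1}\bigr)e_i=w-\tfrac{S-v(i)}{n-1}\mathbf{1}\in\bbbq\otimes\mathcal{R}_f^\bbbq$, using that $\mathbf{1}\in\bbbq\otimes\mathcal{R}_f^\bbbq$. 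If $e_i\in\bbbq\otimes\mathcal{R}_f^\bbbq$ then some positive power of $r_i$ is rational, which is excluded for $i\leq s$; hence $v(i)=S/n$ for every $i\leq s$, which is exactly $(\ref{nonrorkave})$. This is the step your sketch points at but never performs; everything you wrote around it either gives no information or rests on a false auxiliary claim.
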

The proofs of those two propositions above are both almost the same to the one of Theorem 3 in \cite{baron1995polynomial}, because of which we do not give any of them here. A direct corollary of these propositions are as follows:
\begin{proposition}\label{2trantri}
Set $f\in\bbbq[x]$ $(f(0)\neq0)$ to be a polynomial without multiple roots and none of its roots is a root of rational. If the group $\mathcal{G}_f^B$ or $\mathcal{G}_f^\bbbq$ $($respectively, $\mathcal{G}_f)$ is doubly transitive, then the lattice $\mathcal{R}_f^\bbbq$ $($respectively, $\mathcal{R}_f$$)$ is trivial.
\end{proposition}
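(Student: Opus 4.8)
The plan is to derive the statement as an immediate consequence of Lemma \ref{gf2tran} and Lemma \ref{gfq2tran}. Write $\vec\Omega=(\alpha_1,\ldots,\alpha_s,\gamma_1,\ldots,\gamma_t)^T$ as in Lemma \ref{gf2tran}, where the $\alpha_i$ are the roots of $f$ that are not roots of rational and the $\gamma_j$ are the ones that are. The hypothesis that \emph{none} of the roots of $f$ is a root of rational means precisely that $t=0$ and $s=n=\deg(f)$, so $\vec\Omega=(\alpha_1,\ldots,\alpha_n)^T$ consists entirely of non-roots-of-rational. The whole work is then to read off what condition $(\ref{nonrorkave})$ says in this degenerate case $t=0$.

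First I would treat the case where $\mathcal{G}_f$ is doubly transitive. By Lemma \ref{gf2tran}, every $v\in\mathcal{R}_f$ satisfies $v(1)=\cdots=v(s)=\frac{v(1)+\cdots+v(s+t)}{s+t}$; with $t=0$ and $s=n$ this is exactly $v(1)=\cdots=v(n)$ (the averaging condition on the right becomes vacuous since it equates the common value to the average of the very same $n$ equal numbers). Hence every $v\in\mathcal{R}_f$ has all coordinates equal, which is the definition of $\mathcal{R}_f$ being trivial. Next, for the case where $\mathcal{G}_f^B$ or $\mathcal{G}_f^\bbbq$ is doubly transitive, I would apply Lemma \ref{gfq2tran} verbatim in place of Lemma \ref{gf2tran}: every $v\in\mathcal{R}_f^\bbbq$ satisfies $(\ref{nonrorkave})$, which again collapses to $v(1)=\cdots=v(n)$ when $t=0$, so $\mathcal{R}_f^\bbbq$ is trivial. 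This splits the proof into exactly the two cases named in the statement, and each is a one-line specialization.

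There is essentially no obstacle here — the proposition is flagged in the excerpt as "a direct corollary," and the only thing to be careful about is the bookkeeping: making sure that "no root is a root of rational" is correctly identified with $t=0$ (so that the list of $\alpha_i$ exhausts all roots and the partial sum $v(1)+\cdots+v(s)$ equals the full sum $v(1)+\cdots+v(s+t)$), and noting that a relation vector with all coordinates equal is, by the paper's definition of triviality, all that is required. If one wanted to be fully explicit about the edge behaviour, one could remark that the case $n=1$ (where $f$ is linear and has a single root, necessarily rational, contradicting the hypothesis, or forces $\mathcal{R}_f=\bbbz$ which is trivially trivial) causes no trouble, but for $f$ with $f(0)\neq 0$ and no root a root of rational one has $n\geq 1$ and the conclusion holds regardless. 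I would keep the written proof to two or three sentences invoking the two lemmas.
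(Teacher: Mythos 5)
Your proposal is correct and matches the paper's approach exactly: the paper presents Proposition \ref{2trantri} as ``a direct corollary'' of Lemmas \ref{gf2tran} and \ref{gfq2tran}, and your specialization of condition $(\ref{nonrorkave})$ to the case $t=0$, $s=n$ is precisely the intended one-line deduction.
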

This is a generalization of Theorem 3 in \cite{baron1995polynomial}. The essential idea is that the proof of Theorem 3 in \cite{baron1995polynomial} relies only on the properties of Galois-like groups (\emph{i.e.}, preserving all the multiplicative relations) but not on those properties that are possessed uniquely by the Galois groups.

Noting that $\mathcal{G}_f^B\leq\mathcal{G}_f$, one concludes form Proposition \ref{2trantri} that both the lattices $\mathcal{R}_f^\bbbq$ and $\mathcal{R}_f$ are trivial whenever the group $\mathcal{G}_f^B$ is doubly transitive and none of the roots of $f$ is a root of rational. More generally, we have the following proposition and Corollary \ref{triiff}:
\begin{proposition}\label{wfwfq}
Set $f\in\bbbq[x]$ $(f(0)\neq0)$ to be a polynomial without multiple roots. Define $\mathcal{W}_f=\bbbq\otimes\mathcal{R}_f$ and $\mathcal{W}_f^\bbbq=\bbbq\otimes\mathcal{R}_f^\bbbq$ with ``$\,\otimes$'' the tensor product of $\,\mathbb{Z}$-modules. Set $\mathcal{V}_0=\{v\in\bbbq^n\;|\;\sum_{i=1}^n v(i)=0\}$ and $\mathcal{V}_1=\{c(1_{_1},\ldots,1_{_n})^T|\;c\in\bbbq\}$ with $n=\deg(f)$. Suppose that $\mathcal{G}_f^B$ is transitive, then the following conclusions hold:

(i) $\mathcal{W}_f^\bbbq\cap\mathcal{V}_0=\mathcal{W}_f\cap\mathcal{V}_0$; 

(ii) $\mathcal{W}_f^\bbbq=\mathcal{W}_f+\mathcal{V}_1$ and thus $\mathcal{W}_f^\bbbq=\mathcal{W}_f$ iff $f(0)\in\{1,-1\}$.
\end{proposition}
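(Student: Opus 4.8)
The plan is to exploit the fact that $\mathcal{G}_f^B$ acts transitively on the roots and that, by Proposition \ref{auto}(ii), every $\sigma\in\mathcal{G}_f^B$ induces an automorphism of $\langle\Omega\rangle_\bbbq$ fixing $\bbbq^*$ pointwise; concretely this means $\sigma(\vec\Omega)^v=\vec\Omega^v$ for every $v\in\mathcal{R}_f^\bbbq$. First I would observe that the containments $\mathcal{W}_f\subseteq\mathcal{W}_f^\bbbq$ and $\mathcal{V}_1\subseteq\mathcal{W}_f^\bbbq$ are immediate: the former because $1\in\bbbq$, and the latter because $\vec\Omega^{(1,\ldots,1)^T}=\prod r_i=\pm f(0)\in\bbbq^*$ (up to the leading coefficient, which is harmless for membership in $\mathcal{R}_f^\bbbq$). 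Hence $\mathcal{W}_f+\mathcal{V}_1\subseteq\mathcal{W}_f^\bbbq$, and for part (ii) only the reverse inclusion needs work; part (i) will fall out along the way.

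For the heart of the argument, take any $v\in\mathcal{R}_f^\bbbq$ and set $q=\vec\Omega^v\in\bbbq^*$. The key step is to average over the group: since $\sigma(\vec\Omega)^v=\vec\Omega^v$ for all $\sigma\in\mathcal{G}_f^B$, applying the permutation identities from the proof of Proposition \ref{isgroup} (namely $\sigma(\vec\Omega)^v=\vec\Omega^{\hat\sigma^{-1}(v)}$) shows that $\hat\sigma^{-1}(v)$ is another relation in $\mathcal{R}_f^\bbbq$ yielding the same value $q$; consequently $\hat\sigma^{-1}(v)-v\in\mathcal{R}_f$ for every $\sigma$. Summing $\hat\sigma^{-1}(v)$ over a transversal and using transitivity, the coordinate sums get smeared out so that $w:=|\mathcal{G}_f^B|^{-1}\sum_{\sigma}\hat\sigma^{-1}(v)$ is a $\bbbq$-multiple of $(1,\ldots,1)^T$, i.e.\ $w\in\mathcal{V}_1$, while $v-w$ lies in $\mathcal{W}_f$ because each $\hat\sigma^{-1}(v)-v\in\mathcal{R}_f$ and $w-v$ is their average. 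This writes $v=(v-w)+w\in\mathcal{W}_f+\mathcal{V}_1$, giving $\mathcal{W}_f^\bbbq=\mathcal{W}_f+\mathcal{V}_1$. For part (i), if in addition $\sum_i v(i)=0$ then the averaged vector $w\in\mathcal{V}_1$ also has zero coordinate sum, forcing $w=0$, so $v=v-w\in\mathcal{W}_f$; combined with $\mathcal{W}_f\subseteq\mathcal{W}_f^\bbbq$ this gives $\mathcal{W}_f^\bbbq\cap\mathcal{V}_0=\mathcal{W}_f\cap\mathcal{V}_0$. Finally, $\mathcal{W}_f^\bbbq=\mathcal{W}_f$ iff $\mathcal{V}_1\subseteq\mathcal{W}_f$, i.e.\ iff $(1,\ldots,1)^T\in\mathcal{R}_f$ after clearing denominators; but $\vec\Omega^{(1,\ldots,1)^T}$ equals $(-1)^n f(0)/\mathrm{lc}(f)$, and for a primitive (or monic) $f$ this is a root of unity in $\bbbq$ exactly when $f(0)\in\{1,-1\}$, which after a short case check on signs yields the stated criterion.

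The main obstacle I anticipate is making the averaging step fully rigorous: one must be careful that $\mathcal{W}_f$, a $\bbbq$-vector space, genuinely contains the averaged differences (this is fine since we tensored with $\bbbq$, so division by $|\mathcal{G}_f^B|$ is legitimate), and that transitivity of $\mathcal{G}_f^B$ really forces the averaged vector into $\mathcal{V}_1$ rather than merely into some larger $\mathcal{G}_f^B$-invariant subspace — this uses that the only transitive-invariant vectors in $\bbbq^n$ are the constant ones. A secondary subtlety is the bookkeeping in part (ii) distinguishing $f(0)=1$ from $f(0)=-1$ versus $|f(0)|>1$, and handling a non-monic leading coefficient, but that is routine once the structural decomposition $\mathcal{W}_f^\bbbq=\mathcal{W}_f+\mathcal{V}_1$ is in hand.
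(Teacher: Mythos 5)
Your proposal is correct and follows essentially the same route as the paper, which simply defers to Lemma~1 of Dixon (1997) with the Galois group replaced by $\mathcal{G}_f^B$: the defining property $\sigma(\vec\Omega)^v=\vec\Omega^v$ for $v\in\mathcal{R}_f^\bbbq$ plays the role of Galois invariance, and the transitive averaging $w=\tfrac{1}{n}\bigl(\sum_i v(i)\bigr)(1,\ldots,1)^T$ yields the decomposition $\mathcal{W}_f^\bbbq=(\mathcal{W}_f\cap\mathcal{V}_0)+\mathcal{V}_1$, from which (i) and (ii) follow. Your two flagged caveats are real but minor: part (i) for general elements of $\mathcal{W}_f^\bbbq\cap\mathcal{V}_0$ follows by $\bbbq$-linearity of $v\mapsto(v-w,w)$, and the leading-coefficient bookkeeping in the criterion $f(0)\in\{1,-1\}$ is an imprecision of the paper's statement (which tacitly assumes $f$ monic) rather than of your argument.
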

\begin{proof}
The proof is almost the same to the one of Lemma 1 in \cite{dixon1997polynomials}, except that we require the transitivity of the Galois-like group $\mathcal{G}_f^B$ instead of the the transitivity of the Galois group of $f$.\qed
\end{proof}
\begin{corollary}\label{triiff}
Let $f$ be as in Proposition \ref{wfwfq} such that the group $\mathcal{G}_f^B$ is transitive, then the lattice $\mathcal{R}_f$ is trivial iff the lattice $\mathcal{R}_f^\bbbq$ is.
\end{corollary}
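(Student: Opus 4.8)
The plan is to deduce the corollary directly from Proposition \ref{wfwfq}, working over $\bbbq$ throughout and only transferring back to the integral lattices at the end. First I would record the trivial direction: since $\mathcal{R}_f\subseteq\mathcal{R}_f^\bbbq$ always holds (any integer relation $\vec\Omega^v=1$ in particular gives $\vec\Omega^v\in\bbbq$), triviality of $\mathcal{R}_f^\bbbq$ forces triviality of $\mathcal{R}_f$ with no hypothesis needed. So the content is the converse. Assume $\mathcal{R}_f$ is trivial; I want to show $\mathcal{R}_f^\bbbq$ is trivial. Recall that a lattice (or subspace) is trivial exactly when it is contained in the line $\mathcal{V}_1=\{c(1,\dots,1)^T\}$. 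Since tensoring with $\bbbq$ does not change containment in a rational subspace, $\mathcal{R}_f$ is trivial iff $\mathcal{W}_f=\bbbq\otimes\mathcal{R}_f$ is trivial, i.e. $\mathcal{W}_f\subseteq\mathcal{V}_1$; likewise $\mathcal{R}_f^\bbbq$ is trivial iff $\mathcal{W}_f^\bbbq\subseteq\mathcal{V}_1$. So it suffices to show: if $\mathcal{W}_f\subseteq\mathcal{V}_1$ then $\mathcal{W}_f^\bbbq\subseteq\mathcal{V}_1$.

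Now I invoke Proposition \ref{wfwfq}, whose transitivity hypothesis on $\mathcal{G}_f^B$ is exactly what the corollary assumes. Part (ii) of that proposition gives $\mathcal{W}_f^\bbbq=\mathcal{W}_f+\mathcal{V}_1$. If $\mathcal{W}_f\subseteq\mathcal{V}_1$, then $\mathcal{W}_f^\bbbq=\mathcal{W}_f+\mathcal{V}_1\subseteq\mathcal{V}_1+\mathcal{V}_1=\mathcal{V}_1$, hence $\mathcal{W}_f^\bbbq$ is trivial, hence $\mathcal{R}_f^\bbbq$ is trivial. (One could alternatively argue with part (i): $\mathcal{W}_f^\bbbq\cap\mathcal{V}_0=\mathcal{W}_f\cap\mathcal{V}_0$, and since $\mathcal{V}_1\cap\mathcal{V}_0=0$, triviality of $\mathcal{W}_f$ forces $\mathcal{W}_f\cap\mathcal{V}_0=0$, whence $\mathcal{W}_f^\bbbq\cap\mathcal{V}_0=0$; combining with $\mathcal{W}_f^\bbbq=\mathcal{W}_f+\mathcal{V}_1$ and a dimension count again yields $\mathcal{W}_f^\bbbq\subseteq\mathcal{V}_1$. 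I would present whichever is cleaner, probably the one-line argument from (ii).)

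There is essentially no hard step here: the corollary is a formal consequence of Proposition \ref{wfwfq}, so the only thing to be careful about is the bookkeeping of the equivalence ``lattice trivial $\Leftrightarrow$ its $\bbbq$-span is contained in $\mathcal{V}_1$,'' together with the obvious inclusion $\mathcal{R}_f\subseteq\mathcal{R}_f^\bbbq$ for the easy direction. The genuine work has already been done inside Proposition \ref{wfwfq} (equivalently, Lemma 1 of \cite{dixon1997polynomials}), where transitivity of $\mathcal{G}_f^B$ is used to control how much larger $\mathcal{R}_f^\bbbq$ can be than $\mathcal{R}_f$ — namely, at most by the rational multiples captured in $\mathcal{V}_1$. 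So I expect the proof to be three or four lines, and no step should present a real obstacle.
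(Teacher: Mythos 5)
Your proof is correct and follows essentially the same route as the paper: both reduce triviality of the lattices to triviality of their $\bbbq$-spans $\mathcal{W}_f$ and $\mathcal{W}_f^\bbbq$, and then apply the identity $\mathcal{W}_f^\bbbq=\mathcal{W}_f+\mathcal{V}_1$ from Proposition \ref{wfwfq}(ii). The paper simply states both directions at once from that identity, whereas you split off the easy direction via $\mathcal{R}_f\subseteq\mathcal{R}_f^\bbbq$; this is an immaterial difference.
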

\begin{proof}
Since the group $\mathcal{G}_f^B$ is transitive, $\mathcal{W}_f^\bbbq=\mathcal{W}_f+\mathcal{V}_1$ by Proposition \ref{wfwfq}. So $\mathcal{W}_f^\bbbq$ is trivial iff $\mathcal{W}_f$ is trivial. Hence 
\[\begin{array}{rcl}
\vspace{3.3mm}
\mathcal{R}_f \text{ is trivial}&\Longleftrightarrow&\mathcal{W}_f \text{ is trivial}\\
&\Longleftrightarrow&\mathcal{W}_f^\bbbq\text{ is trivial}\\
\end{array}\label{above}\]
$\quad\quad\;\quad\quad\quad\quad\quad\quad\quad\quad\quad\quad\quad\quad\quad\quad\Longleftrightarrow\mathcal{R}_f^\bbbq$ is trivial.\qed
\end{proof}
Similar to Proposition \ref{2trantri}, we have the following result:
\begin{proposition}\label{2homotri}
Let $f\in\bbbq[x]$ be as in Proposition \ref{2trantri}. If the group $\mathcal{G}_f^B$ or $\mathcal{G}_f^\bbbq$ is doubly homogeneous, then the lattice $\mathcal{R}_f^\bbbq$ is trivial.
\end{proposition}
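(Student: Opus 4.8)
The plan is to reduce the double-homogeneity hypothesis on the Galois-like group to the already-established double-transitivity case of Proposition \ref{2trantri}, mirroring the dichotomy used in the proof of Proposition \ref{2homoqtri}. A permutation group that is doubly homogeneous is either doubly transitive or of odd order (this is the standard fact, Exercise 2.1.11 of \cite{dixon1996permutation}). So first I would split into these two cases. If $\mathcal{G}_f^B$ (resp. $\mathcal{G}_f^\bbbq$) is doubly transitive, then the conclusion is immediate from Proposition \ref{2trantri}, since that proposition already handles the doubly transitive case and gives that $\mathcal{R}_f^\bbbq$ is trivial under the hypothesis that no root of $f$ is a root of rational.

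The remaining case is when $\mathcal{G}_f^B$ (resp. $\mathcal{G}_f^\bbbq$) is doubly homogeneous but not doubly transitive, hence of odd order. Here I would appeal directly to Lemma \ref{gfq2tran}-style reasoning, but adapted: the key observation is that the proof of Theorem 3 in \cite{baron1995polynomial}, on which Lemma \ref{gf2tran} and Lemma \ref{gfq2tran} rest, only uses that the group acts in a way that preserves all multiplicative relations together with a transitivity/homogeneity property sufficient to force the averaging identity $(\ref{nonrorkave})$. Indeed, the relevant result one should invoke is the exact analogue for doubly homogeneous Galois-like groups of Lemma \ref{gfq2tran}: if $\mathcal{G}_f^B$ (resp. $\mathcal{G}_f^\bbbq$) is doubly homogeneous, then every $v\in\mathcal{R}_f^\bbbq$ satisfies $(\ref{nonrorkave})$. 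The underlying argument is precisely the one used for doubly homogeneous Galois groups in \cite{baron1995polynomial} (and invoked via \cite{zheng2019computing} Theorem 3.2 in the proof of Proposition \ref{2homoqtri}); since it relies only on the relation-preserving property, it transfers verbatim to the Galois-like setting. Once $(\ref{nonrorkave})$ holds, the hypothesis that none of the roots of $f$ is a root of rational means the list $\alpha_1,\dots,\alpha_s$ is the full list of roots, i.e. $s=n$ and $t=0$, so $(\ref{nonrorkave})$ reads $v(1)=\cdots=v(n)$, which is exactly triviality of $\mathcal{R}_f^\bbbq$.

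Putting the two cases together gives the proposition. The main obstacle, such as it is, is not a computation but a citation/transfer issue: one must make sure that the doubly homogeneous analogue of Lemma \ref{gfq2tran} is genuinely available, i.e. that the argument of \cite{baron1995polynomial} Theorem 3 in the doubly homogeneous case indeed uses nothing beyond the preservation of multiplicative relations (over $\bbbq$ for $\mathcal{G}_f^B$ and $\mathcal{G}_f^\bbbq$), so that replacing the Galois group by the Galois-like group is legitimate. This is the same principle already stated in the paper after Proposition \ref{2trantri}, so the cleanest write-up is to state the doubly homogeneous refinements of Lemmas \ref{gf2tran}--\ref{gfq2tran} (or simply remark that their proofs cover the doubly homogeneous case as well, exactly as in \cite{baron1995polynomial}), and then deduce Proposition \ref{2homotri} in one line as above. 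I would therefore keep the proof short: invoke the doubly homogeneous version of the averaging identity for $\mathcal{R}_f^\bbbq$, note that the no-root-of-rational hypothesis collapses $(\ref{nonrorkave})$ to triviality, and conclude.
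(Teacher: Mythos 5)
Your proposal is correct and is essentially the paper's own argument: the paper proves this proposition by simply noting that the proof of Theorem 3.2 in \cite{zheng2019computing} transfers verbatim to Galois-like groups because it uses only the preservation of multiplicative relations, which is exactly the transfer principle you articulate (your explicit split into the doubly transitive and odd-order cases just unpacks one layer of that cited proof). The caveat you flag---that one must check the underlying argument really uses nothing beyond relation-preservation---is present to the same degree in the paper, which asserts the transfer without further verification.
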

This is a generalization of  \cite{zheng2019computing} Theorem 3.2. The proof of this proposition is almost the same with the one given in \cite{zheng2019computing}, hence we omit it. Another generalization trough Galois-like groups of the ``Only If'' part of Proposition \ref{trivial} is given below:
\begin{proposition}\label{qtrigen}
Set $f\in\bbbq[x]$ $(f(0)\neq0)$ to be a polynomial without multiple roots and one of its roots is not a root of rational. Set $\mathcal{G}=\mathcal{G}_f$ $($respectively, $\mathcal{G}=\mathcal{G}_f^B$ or $\mathcal{G}_f^\bbbq)$ and $\mathcal{R}=\mathcal{R}_f\;(\text{resp., $\mathcal{R}=\mathcal{R}_f^\bbbq$})$. If $\mathcal{G}$ is transitive and the pair $(\mathcal{G},\mathcal{H})$, with $\mathcal{H}$ a root stabilizer, is $\bbbq$-trivial, then $\mathcal{R}$ is trivial.
\end{proposition}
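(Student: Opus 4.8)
The plan is to mimic, in the Galois-like setting, the ``Only If'' argument of Proposition \ref{trivial}, reducing everything to Girstmair's module-theoretic characterization of $\bbbq$-triviality. First I would fix a root $\alpha$ of $f$ with stabilizer $\mathcal{H}=\mathcal{G}_\alpha$ inside the transitive group $\mathcal{G}$, and use transitivity to obtain the standard bijection $\tau\colon\mathcal{G}/\mathcal{H}\to\Omega$, $\overline{g}\mapsto g(\alpha)$ (here $g(\alpha)$ means the image of $\alpha$ under the permutation $g$ of the roots, which makes sense since $\mathcal{G}\leq\Sigma$). This $\tau$ induces a $\bbbz$-module isomorphism $\bbbz^\Omega\simeq\bbbz[\mathcal{G}/\mathcal{H}]$ under which the relevant relation lattice becomes an additive submodule of $\bbbz[\mathcal{G}/\mathcal{H}]$; note that by Proposition \ref{isgroup} the lattice $\mathcal{R}$ is stable under the action of $\mathcal{G}$ (this is exactly what the definition of $\mathcal{G}_f$, resp.\ $\mathcal{G}_f^B$, $\mathcal{G}_f^\bbbq$, was set up to give), so its image $M$ is genuinely a $\bbbq[\mathcal{G}]$-submodule after tensoring with $\bbbq$.

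Next I would invoke Girstmair's machinery (\cite{girstmair1999linear}, Propositions 2--4 and Definition 3) exactly as in the ``Only If'' direction of Proposition \ref{trivial}: the lattice $\mathcal{R}$, viewed in $\bbbz[\mathcal{G}/\mathcal{H}]$, is an admissible subset in the multiplicative sense, hence a $\bbbq$-admissible subset, hence the $\bbbq[\mathcal{G}]$-module it generates is $\bbbq$-admissible. Since $(\mathcal{G},\mathcal{H})$ is assumed $\bbbq$-trivial, this module must be either $0$ or $\mathcal{V}_1$. If it is $0$, then $\mathcal{R}=0$ and in particular $\mathcal{R}$ is trivial. If it is $\mathcal{V}_1$, then every relation vector in $\mathcal{R}$ has all coordinates equal, which is precisely the definition of $\mathcal{R}$ being trivial. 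Either way we conclude. For the $\mathcal{G}=\mathcal{G}_f^B$ or $\mathcal{G}_f^\bbbq$ case one runs the same argument with $\mathcal{R}_f^\bbbq$ and $\bbbz^\Omega\simeq\bbbz[\mathcal{G}/\mathcal{H}]$, again using that these groups preserve the $\mathbb{Q}$-relations (Proposition \ref{isgroup}) so the image is a $\bbbq[\mathcal{G}]$-submodule.

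The one genuinely new point — and the step I expect to be the main obstacle — is checking that Girstmair's Propositions 2 and 4 still apply. In Proposition \ref{trivial} the element $\alpha$ lived in the Galois extension $L$ and the pair $(\mathcal{G},\mathcal{H})$ literally arose from $L/\bbbq$; here $\mathcal{G}$ is only a \emph{Galois-like} group, a priori larger than $G_f$, so I must verify that the hypotheses of \cite{girstmair1999linear} are about abstract transitive pairs together with a relation module, not about honest field automorphisms. Concretely, the statements I need — that a $\mathcal{G}$-stable relation sublattice of $\bbbz[\mathcal{G}/\mathcal{H}]$ is ``admissible'' and that over a number field admissibility in the multiplicative sense coincides with $\bbbq$-admissibility — are purely group-/module-theoretic in \cite{girstmair1999linear}, so the argument goes through; but I would spell out this point carefully, and also note that the hypothesis ``one of the roots is not a root of rational'' is what rules out the degenerate possibility that $\mathcal{R}$ (or $\mathcal{R}_f^\bbbq$) is all-of-$\bbbz^n$-sized in a way incompatible with triviality, paralleling Remark \ref{replacedbyror}.
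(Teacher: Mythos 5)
There is a genuine gap, and it sits exactly at the point you flagged as ``the main obstacle'' and then waved away. Girstmair's Propositions 2 and 4 are \emph{not} purely group-/module-theoretic: Proposition 2 produces the multiplicative relation module from an algebraic number $\alpha\in L^*$ whose stabilizer \emph{in the Galois group} is $H$, and Proposition 4 converts multiplicative admissibility into $\bbbq$-admissibility using the arithmetic of $L^*$ as a $\bbbz[G]$-module under honest field automorphisms. A Galois-like group $\mathcal{G}$ is in general strictly larger than $G_f$ and its elements are mere root permutations preserving multiplicative relations; they do not act on $L^*$, so there is no $\bbbz[\mathcal{G}]$-module structure on $L^*$ and no way to manufacture the element $\mu\in\bbbq[\mathcal{G}]$ with $\mathcal{G}_\mu=\mathcal{H}$ that the very definition of $\bbbq$-admissibility demands. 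A symptom that something is wrong: in your chain (``admissible, hence $0$ or $\mathcal{V}_1$ by $\bbbq$-triviality'') the hypothesis that some root is not a root of rational is never used, yet it is essential --- if every root were a root of rational, $\mathcal{R}_f^{\bbbq}$ could be all of $\bbbz^n$ while $(\mathcal{G},\mathcal{H})$ is still $\bbbq$-trivial. So the module generated by $\mathcal{R}$ simply need not be $\bbbq$-admissible, and you cannot conclude it is $0$ or $\mathcal{V}_1$ from the definition of $\bbbq$-triviality alone.

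The paper's proof avoids admissibility altogether. It uses the \emph{representation-theoretic} consequence of $\bbbq$-triviality (Girstmair's Proposition 12, i.e.\ Proposition \ref{equi} here): $\bbbq^n=\mathcal{V}_1\oplus\mathcal{V}_0$ with $\mathcal{V}_0$ a $\bbbq$-irreducible $\bbbq[\mathcal{G}]$-module. Transitivity of $\mathcal{G}$ then forces $\mathcal{V}_0$ and $\mathcal{V}_1$ to be the \emph{only} irreducible submodules (any other irreducible $\mathcal{V}$ meets $\mathcal{V}_0$ trivially, hence is one-dimensional, and averaging over $\mathcal{G}$ shows $\mathcal{V}=\mathcal{V}_1$), so every $\bbbq[\mathcal{G}]$-submodule of $\bbbq^n$ is one of $\{\mathbf{0}\},\mathcal{V}_1,\mathcal{V}_0,\bbbq^n$. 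Your first paragraph correctly establishes that $\mathcal{W}=\bbbq\otimes\mathcal{R}$ is such a submodule; the missing step is then to rule out $\mathcal{V}_0\subset\mathcal{W}$, which is precisely where the not-a-root-of-rational hypothesis enters ($\mathcal{V}_0\subset\mathcal{W}$ together with $\mathcal{V}_1\subset\mathcal{W}_f^{\bbbq}$ would give $\mathcal{W}_f^{\bbbq}=\bbbq^n$, i.e.\ every root a root of rational). Replacing your admissibility step by this classification of submodules repairs the argument.
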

\begin{proof}
Let $\mathcal{V}_1,\mathcal{V}_0$ and $\mathcal{W}=\bbbq\otimes \mathcal{R}$ be as in Proposition \ref{wfwfq}. For any $\sigma\in\mathcal{G}$, we define a coordinate permutation $\hat{\sigma}$ as in the proof of Proposition \ref{isgroup}. Then $\mathcal{W}$ is a $\bbbq[\mathcal{G}]$-submodule of $\bbbq^n$ by the definition of a Galois-like group (for any $v\in\bbbq^n$ or $v\in\mathcal{W}$, a group element $\sigma$ operates in the way so that it maps $v$ to the vector $\hat\sigma^{-1}(v)$). 

Since the pair $(\mathcal{G},\mathcal{H})$ is $\bbbq$-trivial, $\bbbq^n$ can be decomposed into two irreducible $\bbbq[\mathcal{G}]$-submodules: $\bbbq^n=\mathcal{V}_1\oplus\mathcal{V}_0$ (\cite{girstmair1999linear} Proposition 12). Since $\mathcal{G}$ is transitive, $\mathcal{V}_0$ and $\mathcal{V}_1$ are the only two irreducible $\bbbq[\mathcal{G}]$-submodules of $\bbbq^n$: 

Suppose that $\mathcal{V}\neq\mathcal{V}_0$ is an irreducible $\bbbq[\mathcal{G}]$-submodules and assume that $\mathcal{V}\cap\mathcal{V}_0\supsetneqq\{\mathbf{0}\}$. Then $\mathcal{V}\supsetneqq\mathcal{V}\cap\mathcal{V}_0$ or $\mathcal{V}_0\supsetneqq\mathcal{V}\cap\mathcal{V}_0$. This contradicts the fact that both $\mathcal{V}$ and $\mathcal{V}_0$ are irreducible, since $\mathcal{V}\cap\mathcal{V}_0\supsetneqq\{\mathbf{0}\}$ is a proper $\bbbq[\mathcal{G}]$-submodules of at least one of them. So we have $\mathcal{V}\cap\mathcal{V}_0=\{\mathbf{0}\}$. Noting that the $\bbbq$-dimension of $\mathcal{V}_0$ is $n-1$, one concludes that $\text{dim}_{\bbbq}(\mathcal{V})=1$. Set $v\in\mathcal{V}\backslash\{\mathbf{0}\}$, then $v\notin\mathcal{V}_0$ and $\sum_{i=1}^nv(i)\neq0$. Thus $\mathcal{V}\ni\sum_{\sigma\in\mathcal{G}}\hat\sigma^{-1}(v)=\frac{|\mathcal{G}|}{n}(\sum_{i=1}^nv(i))(1_{_1},\ldots,1_{_n})^T\neq\mathbf{0}$ follows from the transitivity of $\mathcal{G}$. Hence $\mathcal{V}=\mathcal{V}_1$.

Thus all the $\bbbq[\mathcal{G}]$-submodules of $\bbbq^n$ are $\{\mathbf{0}\}$, $\mathcal{V}_1$, $\mathcal{V}_0$ and $\bbbq^n$ itself. If $\mathcal{V}_0\subset\mathcal{W}$, then  $\mathcal{W}\subset\mathcal{W}_f^\bbbq$ and $\mathcal{V}_1\subset\mathcal{W}_f^\bbbq$ imply that $\mathcal{W}_f^\bbbq=\bbbq^n$, which contradicts the assumption that $f$ has a root that is not a root of rational. Hence $\mathcal{V}_0\not\subset\mathcal{W}$, which means $\mathcal{W}=\{\mathbf{0}\}$ or $\mathcal{W}=\mathcal{V}_1$. Thus $\mathcal{W}$ is trivial and so is the lattice $\mathcal{R}$.\qed
\end{proof}
\subsection{Necessary and Sufficient Condition for Exponent Lattice Triviality}
In this subsection, we characterize those polynomials $f$ with a trivial exponent lattice by giving a necessary and sufficient condition through the concept of a Galois-like group.
\begin{theorem}\label{rftri}
Set $f\in\bbbq[x]$ $(f(0)\neq0)$ to be a polynomial without multiple roots. Denote by $\beta_1,\ldots,\beta_t$ the rational roots of $f$ (if there are any) and by $\beta_0$ the rational number which is the product of all non-root-of-rational roots of $f$ (if there are any). Then the lattice $\mathcal{R}_f$ is trivial iff all the following conditions hold:

(i) the Galois-like group $\mathcal{G}_f=\Sigma$;

(ii) any root of $f$ is rational or non-root-of-rational;

(iii) the lattice $\mathcal{R}_{v_f}$ is trivial with the vector $v_f$ given by:
\[
v_f=\left\{ {\begin{array}{*{20}{ll}}
(\beta_0,\beta_1,\ldots,\beta_t)^T,&\text{ if $f$ has both rational and}\\
&\;\;\;\;\;\text{non-root-of-rational roots},\\
(\beta_1,\ldots,\beta_t)^T,&\text{ if any root of $f$ is rational},\\
(\beta_0),&\text{ if any root of $f$ is non-root-of-rational}.
\end{array}} \right.
\]
\end{theorem}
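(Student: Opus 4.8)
The plan is to establish the two implications separately; the ``only if'' part is essentially formal, while the ``if'' part carries the real content. Throughout write $\vec\Omega=(r_1,\dots,r_n)^T$ for the listed roots ($n=\deg f$) and $e_1,\dots,e_n$ for the standard basis of $\bbbz^n$. From Definition \ref{galoislike} and the identities recorded in the proof of Proposition \ref{isgroup}, membership $\sigma\in\mathcal{G}_f$ is equivalent to the coordinate permutation $\hat\sigma$ carrying $\mathcal{R}_f$ into itself; hence condition (i) says precisely that \emph{every} coordinate permutation of an element of $\mathcal{R}_f$ is again in $\mathcal{R}_f$, a fact I will use repeatedly.

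For the ``only if'' direction, assume $\mathcal{R}_f$ is trivial, so every $v\in\mathcal{R}_f$ is a scalar multiple of $(1,\dots,1)^T$. Then (i) is immediate, since for such $v=c(1,\dots,1)^T$ one has $\sigma(\vec\Omega)^v=\big(\prod_i r_i\big)^c=\vec\Omega^v=1$ for all $\sigma\in\Sigma$. For (ii): if some root $\alpha$ were a root of rational but not rational, with $\alpha^k\in\bbbq$ and $k\geq2$, then its minimal polynomial over $\bbbq$ divides $f$ and has degree at least $2$, so $f$ has a second root $\alpha'$ conjugate to $\alpha$; applying a suitable automorphism gives $(\alpha'/\alpha)^k=1$ with $\alpha'/\alpha\neq1$, and the relation $\alpha'^m\alpha^{-m}=1$ (with $m$ the order of $\alpha'/\alpha$) is a nontrivial element of $\mathcal{R}_f$, a contradiction. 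For (iii): from any $w\in\mathcal{R}_{v_f}$ one produces $v\in\mathcal{R}_f$ by writing the $\beta_0$-coordinate of $w$ on each non-root-of-rational root of $f$ and copying the rational-root coordinates of $w$; triviality of $\mathcal{R}_f$ makes all entries of $v$ equal, which makes all entries of $w$ equal. (When $f$ has only rational, respectively only non-root-of-rational, roots, $v_f$ is a reordering of $\vec\Omega$, respectively a single rational number, and the claim is immediate.)

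For the ``if'' direction, assume (i)--(iii) and suppose for contradiction that $\mathcal{R}_f$ contains a nontrivial $v$. Using (i), every coordinate permutation of $v$ lies in $\mathcal{R}_f$, so subtracting from $v$ its image under a transposition $(i\,j)$ yields $(v_i-v_j)(e_i-e_j)\in\mathcal{R}_f$; thus $r_i/r_j$ is a root of unity whenever $v_i\neq v_j$. When $v_i=v_j$, nontriviality of $v$ provides some $k$ with $v_k\neq v_i$, and then $r_i/r_j=(r_i/r_k)(r_k/r_j)$ is again a root of unity, so in fact \emph{every} quotient of two roots of $f$ is a root of unity. Writing $r_i=\zeta_i r_1$ with $\zeta_i$ a root of unity and using $\prod_i r_i\in\bbbq^*$, the element $r_1^n$ is a rational number times a root of unity, so a suitable power of $r_1$, and consequently of each $r_i$, lies in $\bbbq$; that is, every root of $f$ is a root of rational. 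By (ii) every root of $f$ is then rational, whence $v_f$ is a reordering of $\vec\Omega$ and $\mathcal{R}_{v_f}$, $\mathcal{R}_f$ are trivial together; so (iii) forces $\mathcal{R}_f$ trivial, contradicting the choice of $v$.

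The crux, and the step I expect to demand the most care, is the ``if'' direction: it rests on the observation that the lone hypothesis $\mathcal{G}_f=\Sigma$ is rigid enough that the existence of a single nontrivial multiplicative relation, pushed around by the symmetric-group action on exponent vectors and combined with the rationality of $f$, forces \emph{all} quotients of roots to be roots of unity. The two delicate points inside this are the combinatorial step extending ``$r_i/r_j$ is a root of unity when $v_i\neq v_j$'' to every pair $i,j$, and the arithmetic step passing from ``all quotients of roots are roots of unity'' together with ``$\prod_i r_i\in\bbbq$'' to ``every root of $f$ is a root of rational''. Once these are in hand, condition (ii) collapses the situation to a polynomial with only rational roots, where (iii) is literally the assertion that $\mathcal{R}_f$ is trivial, and the argument closes.
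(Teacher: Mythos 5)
Your proposal is correct, and while your ``only if'' direction coincides with the paper's (same treatment of conditions (i)--(iii), including the conjugate-quotient argument for (ii)), your ``if'' direction takes a genuinely different route. The paper splits on whether $f$ has a non-root-of-rational root and, in that case, invokes Proposition \ref{qtrigen}: since $(\Sigma,\mathcal{H})$ is doubly homogeneous hence $\bbbq$-trivial, $\bbbq^n$ decomposes into the two irreducible $\bbbq[\Sigma]$-modules $\mathcal{V}_0\oplus\mathcal{V}_1$, and $\bbbq\otimes\mathcal{R}_f$ is forced to be $\{\mathbf{0}\}$ or $\mathcal{V}_1$; condition (iii) is only consumed in the all-rational case. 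You instead argue contrapositively and elementarily: a nontrivial $v\in\mathcal{R}_f$ together with the transpositions supplied by $\mathcal{G}_f=\Sigma$ yields $(v_i-v_j)(e_i-e_j)\in\mathcal{R}_f$, whence (after the two-step combinatorial patch for $v_i=v_j$) every quotient of roots is a root of unity; combining with $\prod_i r_i\in\bbbq^*$ shows every root is a root of rational, and (ii), (iii) then close the argument. Both steps you flag as delicate are sound. What each approach buys: yours is self-contained and avoids the representation-theoretic machinery entirely, and it makes explicit that a single nontrivial relation under full symmetry forces total degeneracy of the root quotients; the paper's argument, by contrast, only uses that $(\mathcal{G}_f,\mathcal{H})$ is transitive and $\bbbq$-trivial, which is exactly what is needed for the strengthening recorded in Corollary \ref{4eq}(iv) --- your transposition trick does not reach that generalization, since a transitive group giving a $\bbbq$-trivial (or even doubly transitive) pair need not contain any transposition.
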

\begin{proof}
``If'': When $\deg(f)=1$, $\mathcal{R}_f$ is trivial and we are done. Suppose in the following that $\deg(f)\geq2$. Then the pair $(\mathcal{G}_f,\mathcal{H})=(\Sigma,\mathcal{H})$ is doubly homogeneous thus also $\bbbq$-trivial for any root stabilizer $\mathcal{H}$. If $f$ has a root that is not a root of rational, then $\mathcal{R}_f$ is trivial by Proposition \ref{qtrigen}. When all the roots of $f$ are rational, the lattice $\mathcal{R}_f=\mathcal{R}_{v_f}$ is trivial.

``Only If'': Now that $\mathcal{R}_f$ is trivial, the condition (i) is straightforward. Suppose that $f$ has a root $r$ which is a root of rational but not a rational number. Then the conjugations of $r$, say, $\{r=r^{(1)},r^{(2)},\ldots,r^{(s)}\}$, with $s\geq2$, are all the  roots of $f$. Then there is a positive integer $m$ so that $(r/r^{(2)})^m=1$. Thus $\mathcal{R}_f$ is non-trivial, which contradicts the assumption. So the condition (ii) holds. Since any nontrivial multiplicative relation of the vector $v_f$ results in a nontrivial multiplicative relation between the roots of $f$, the condition (iii) holds.\qed
\end{proof}

From the ``If'' part of the proof we observe that, when restricted to polynomials $f$ with degree higher than one, the condition (i) in Theorem \ref{rftri} can be replaced by the statement ``$\mathcal{G}_f$ is transitive and the pair $(\mathcal{G}_f,\mathcal{H})$ is $\bbbq$-trivial for any root stabilizer $\mathcal{H}$''. An interesting result follows directly from this observation:
\begin{corollary}\label{4eq}
Let $f$ be as in Theorem \ref{rftri}. If $\deg(f)\geq2$ and the conditions (ii)\,--\,(iii) in  Theorem \ref{rftri} hold, then the following conditions are equivalent to each other:

(i) $\mathcal{G}_f=\Sigma$;

(ii) $\mathcal{G}_f$ is doubly transitive;

(iii) $\mathcal{G}_f$ is doubly homogeneous;

(iv) $\mathcal{G}_f$ is transitive and the pair $(\mathcal{G}_f,\mathcal{H})$ is $\bbbq$-trivial for any root stabilizer $\mathcal{H}$.
\end{corollary}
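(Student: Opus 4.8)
The plan is to deduce Corollary \ref{4eq} as a direct consequence of Theorem \ref{rftri} together with the observation stated just before it. Since conditions (ii) and (iii) of Theorem \ref{rftri} are assumed to hold, Theorem \ref{rftri} tells us that $\mathcal{R}_f$ is trivial if and only if condition (i) there (namely $\mathcal{G}_f=\Sigma$) holds. The ``If'' part of the proof of Theorem \ref{rftri} only used that $(\mathcal{G}_f,\mathcal{H})$ is $\bbbq$-trivial and $\mathcal{G}_f$ transitive (via Proposition \ref{qtrigen}), not the full equality $\mathcal{G}_f=\Sigma$; this is exactly what the observation records. So my strategy is to run a cycle of implications
\[
\text{(i)}\ \Longrightarrow\ \text{(ii)}\ \Longrightarrow\ \text{(iii)}\ \Longrightarrow\ \text{(iv)}\ \Longrightarrow\ \text{(i)}.
\]

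First I would establish (i) $\Rightarrow$ (ii) $\Rightarrow$ (iii): the full symmetric group $\Sigma$ on $n=\deg(f)\geq 2$ letters is doubly transitive, and every doubly transitive group is doubly homogeneous (the latter only requiring that unordered pairs be movable to unordered pairs), so these two implications are immediate. Next, (iii) $\Rightarrow$ (iv): if $\mathcal{G}_f$ is doubly homogeneous and $n\geq 2$ it is in particular transitive, and by Proposition \ref{2homoqtri} (applied with $G=\mathcal{G}_f$, $H=\mathcal{H}$, using that $\mathcal{G}_f$ acts faithfully on the roots, hence on $\mathcal{G}_f/\mathcal{H}$) the pair $(\mathcal{G}_f,\mathcal{H})$ is $\bbbq$-trivial; since all root stabilizers are conjugate, this holds for every root stabilizer $\mathcal{H}$, giving (iv).

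The remaining implication (iv) $\Rightarrow$ (i) is where the content lies, and here I would invoke the observation preceding the corollary. Under (iv), $\mathcal{G}_f$ is transitive and $(\mathcal{G}_f,\mathcal{H})$ is $\bbbq$-trivial. I split on whether $f$ has a non-root-of-rational root. If it does, Proposition \ref{qtrigen} (with $\mathcal{G}=\mathcal{G}_f$, $\mathcal{R}=\mathcal{R}_f$) yields that $\mathcal{R}_f$ is trivial. If instead \emph{every} root of $f$ is a root of rational, then by condition (ii) of Theorem \ref{rftri} every root is in fact rational, so $\mathcal{R}_f=\mathcal{R}_{v_f}$, which is trivial by condition (iii) of Theorem \ref{rftri}. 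In either case $\mathcal{R}_f$ is trivial, and then the ``Only If'' direction of Theorem \ref{rftri} forces condition (i) there, i.e. $\mathcal{G}_f=\Sigma$. This closes the cycle.

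I expect the main (indeed the only) subtlety to be making sure the hypotheses of the cited results are genuinely available: one must check that $\mathcal{G}_f$ acts faithfully on $\mathcal{G}_f/\mathcal{H}$ (true because $\mathcal{G}_f\leq\Sigma$ acts faithfully on $\Omega$, and $\mathcal{G}_f/\mathcal{H}$ is equivariantly identified with $\Omega$ when $\mathcal{G}_f$ is transitive), that the $\bbbq$-triviality of $(\mathcal{G}_f,\mathcal{H})$ does not depend on which root stabilizer is chosen (conjugacy of point stabilizers in a transitive action), and that the case division in the (iv) $\Rightarrow$ (i) step is exhaustive precisely because condition (ii) of Theorem \ref{rftri} is in force. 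None of these requires new computation; the corollary is essentially a repackaging of Theorem \ref{rftri} and its proof.
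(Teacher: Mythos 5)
Your overall architecture coincides with the paper's: the cycle $(i)\Rightarrow(ii)\Rightarrow(iii)\Rightarrow(iv)\Rightarrow(i)$, with $(i)\Rightarrow(ii)\Rightarrow(iii)$ trivial and $(iv)\Rightarrow(i)$ obtained by combining Proposition \ref{qtrigen} (or the all-rational-roots case) with the ``Only If'' direction of Theorem \ref{rftri}; that last implication is handled correctly and in the same spirit as the paper.

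The gap is in your step $(iii)\Rightarrow(iv)$, where you assert that a doubly homogeneous group of degree $n\geq 2$ ``is in particular transitive.'' That is false for $n=2$: on a two-element set there is only one $2$-subset, so even the identity group is doubly homogeneous while being intransitive, and intransitive Galois-like groups of degree $2$ do occur (e.g.\ $f=(x-2)(x-4)$ has $\mathcal{G}_f=1$). Transitivity of $\mathcal{G}_f$ in the degree-$2$ case is therefore not a formal consequence of double homogeneity; it must be extracted from the hypotheses (ii)--(iii) of Theorem \ref{rftri}, which is exactly what the paper does: if both roots are rational then $\mathcal{R}_f=\mathcal{R}_{v_f}$ is trivial by (iii), whence $\mathcal{G}_f=\Sigma$; if some root is not a root of rational then by (ii) it is irrational, so $f$ is irreducible and $\Sigma=G_f\leq\mathcal{G}_f$. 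For $\deg(f)\geq3$ the implication ``doubly homogeneous $\Rightarrow$ transitive'' is true but not immediate either; the paper cites Theorem 9.4A of \cite{dixon1996permutation} for it, whereas you state it without justification. The conclusion of your $(iii)\Rightarrow(iv)$ step is correct, but as written the argument skips precisely the part of the proof that carries the content.
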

\begin{proof}
The implications $(i)\Rightarrow(ii)$ and $(ii)\Rightarrow(iii)$ are trivial. The implication $(iv)\Rightarrow(i)$ follows from the ``If'' part of the proof of  \ref{rftri}. Now we prove the implication $(iii)\Rightarrow(iv)$: If $\deg(f)=2$ and $f$ has only rational roots, the lattice $\mathcal{R}_f=\mathcal{R}_{v_f}$ is trivial. So $\mathcal{G}_f=\Sigma$. If $\deg(f)=2$ but $f$ has a root that is not a root of rational, $f$ is irreducible over $\bbbq$ and $\Sigma=G_f\leq\mathcal{G}_f$ with $G_f$ the Galois group of $f$. In either case $\mathcal{G}_f$ is transitive. When $\deg(f)\geq 3$, the transitivity of $\mathcal{G}_f$ follows from \cite{dixon1996permutation} Theorem 9.4A. Now the $\bbbq$-triviality of the pair $(\mathcal{G},\mathcal{H})$ follows from Proposition \ref{2homoqtri}.\qed
\end{proof}
Thus the condition (i) of Theorem \ref{rftri} can be replaced by any one of the conditions (ii)--(iv) in Corollary \ref{4eq}.

For the lattice $\mathcal{R}_f^\bbbq$, we have a similar result:
\begin{theorem}\label{rfqtri}
Set $f\in\bbbq[x]$ $(f(0)\neq0)$ to be a polynomial without multiple roots. Then the lattice $\mathcal{R}_f^\bbbq$ is trivial iff all the following conditions hold:

(i) the Galois-like group $\mathcal{G}_f^B=\Sigma$;

(ii) either $\deg(f)=1$ or any root of $f$ is not a root of rational;

(iii) $f$ is irreducible over $\bbbq$.
\end{theorem}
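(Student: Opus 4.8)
The plan is to mirror the structure of the proof of Theorem \ref{rftri}, using the group $\mathcal{G}_f^B$ and the lattice $\mathcal{R}_f^\bbbq$ in place of $\mathcal{G}_f$ and $\mathcal{R}_f$, and to exploit the fact that $\mathcal{G}_f^B$ preserves the values $\vec\Omega^v$ for $v\in\mathcal{R}_f^\bbbq$, not merely the property of lying in $\bbbq$. For the ``If'' direction, I would dispose of the case $\deg(f)=1$ immediately, where $\mathcal{R}_f^\bbbq=\{0\}$ is trivial. When $\deg(f)\geq 2$, condition (ii) says no root of $f$ is a root of rational, and condition (iii) says $f$ is irreducible, so by Proposition \ref{subg} the Galois group $G_f$ sits inside $\mathcal{G}_f^B=\Sigma$; in particular $\mathcal{G}_f^B$ is transitive. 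Condition (i) makes the pair $(\mathcal{G}_f^B,\mathcal{H})=(\Sigma,\mathcal{H})$ doubly homogeneous, hence $\bbbq$-trivial by Proposition \ref{2homoqtri}, for any root stabilizer $\mathcal{H}$. Then Proposition \ref{qtrigen}, applied with $\mathcal{G}=\mathcal{G}_f^B$ and $\mathcal{R}=\mathcal{R}_f^\bbbq$, yields that $\mathcal{R}_f^\bbbq$ is trivial, since $f$ has a root that is not a root of rational.

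For the ``Only If'' direction, assume $\mathcal{R}_f^\bbbq$ is trivial. Condition (i) is the point requiring the real argument: I need to show that \emph{every} permutation $\sigma\in\Sigma$ preserves each relation $v\in\mathcal{R}_f^\bbbq$ in the strong sense $\sigma(\vec\Omega)^v=\vec\Omega^v$. But triviality of $\mathcal{R}_f^\bbbq$ means that any $v\in\mathcal{R}_f^\bbbq$ has all coordinates equal, say $v=k(1,\ldots,1)^T$, so $\vec\Omega^v=(\prod r_i)^k=(\pm f(0))^k\in\bbbq$ and likewise $\sigma(\vec\Omega)^v=(\prod r_i)^k$ since $\sigma$ merely permutes the roots; hence $\sigma(\vec\Omega)^v=\vec\Omega^v$ for every $\sigma$, giving $\mathcal{G}_f^B=\Sigma$. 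For condition (ii): if $\deg(f)\geq 2$ and some root $r$ of $f$ were a root of rational, then $r^m\in\bbbq$ for some $m\geq 1$, so $me_i\in\mathcal{R}_f^\bbbq$ where $e_i$ is the corresponding standard basis vector; since $\deg(f)\geq 2$ this $me_i$ is a nontrivial relation, contradicting triviality. For condition (iii): if $f$ were reducible, say $f=gh$ with $g,h$ nonconstant, then with suitable ordering one can write down a vector $v$ supported on the roots of $g$ (for instance $v$ with entries summing the roots of $g$) for which $\vec\Omega^v=\pm g(0)\in\bbbq$ while $v$ is not a multiple of $(1,\ldots,1)^T$; again this contradicts triviality. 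Here I should double-check the reducible case: if $\deg g=\deg h$ it is not quite immediate that the ``all coordinates equal'' vector cannot arise this way, so the cleanest argument is to take $v$ to be the indicator-type vector of the roots of $g$ scaled so that $\vec\Omega^v\in\bbbq$, and observe it is nonconstant because $g$ is a proper factor.

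The main obstacle I anticipate is pinning down condition (iii) cleanly: one must produce, from reducibility alone, an explicit nontrivial vector in $\mathcal{R}_f^\bbbq$, and the natural candidate (the vector that is $1$ on the roots of one factor and $0$ elsewhere) need not itself lie in $\mathcal{R}_f^\bbbq$ unless that factor's constant term is rational — which it is, since $g\in\bbbq[x]$ after clearing content, so $\prod_{r:\,g(r)=0} r=\pm g(0)\in\bbbq^*$ and the indicator vector does work. Once that is observed the argument is routine. I would also remark, as an aside paralleling Corollary \ref{4eq}, that under conditions (ii)--(iii) the equality $\mathcal{G}_f^B=\Sigma$ can be replaced by any of ``$\mathcal{G}_f^B$ doubly transitive'', ``$\mathcal{G}_f^B$ doubly homogeneous'', or ``$\mathcal{G}_f^B$ transitive with $(\mathcal{G}_f^B,\mathcal{H})$ $\bbbq$-trivial'', by the same reasoning via Proposition \ref{2homoqtri} and \cite{dixon1996permutation} Theorem 9.4A, although this is not needed for the statement itself.
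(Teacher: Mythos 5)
Your proposal is correct and follows essentially the same route as the paper: the ``If'' direction via double homogeneity of $\Sigma$, Proposition \ref{2homoqtri} and Proposition \ref{qtrigen}, and the ``Only If'' direction by exhibiting explicit nonconstant vectors in $\mathcal{R}_f^\bbbq$ from reducibility or from a root of rational. Your treatment of condition (ii) (using $me_i\in\mathcal{R}_f^\bbbq$ directly rather than the paper's quotient of conjugates $(r/r^{(2)})^m=1$) is a slightly more direct variant of the same idea, and your filled-in justification of condition (i) and of the nonconstancy of the indicator vector of a proper factor are both sound.
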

\begin{proof}
``If'': When $\deg(f)=1$, this is trivial. Suppose in the following that $\deg(f)\geq2$ and any root of $f$ is not a root of rational. Then $\mathcal{G}_f^B=\Sigma$ is transitive and doubly homogeneous. Thus the pair $(\mathcal{G}_f^B,\mathcal{H})$ is $\bbbq$-trivial for any root stabilizer $\mathcal{H}$ by Proposition \ref{2homoqtri}. So $\mathcal{R}_f^\bbbq$ is trivial by Proposition  \ref{qtrigen}.

``Only If'': Now that $\mathcal{R}_f^\bbbq$ is trivial, it is clear that $\mathcal{G}_f^B=\Sigma$.  Suppose on the contrary that $f$ is reducible and $g_1$, $g_2$ are two of its factors. Let $\alpha_1,\ldots,\alpha_s$ denote the roots of $g_1$ and $\gamma_1,\ldots,\gamma_t$ the ones of $g_2$. Then for any two distinct integers $k$ and $l$, $(\alpha_1\ldots\alpha_s)^k(\gamma_1\ldots\gamma_t)^l\in\bbbq$. This contradicts the assumption that $\mathcal{R}_f^\bbbq$ is trivial. So $f$ is irreducible. Assume that $\deg(f)\geq2$ and one of the roots $r$ of $f$ is a root of rational. The conjugations of $r$, say, $\{r=r^{(1)},r^{(2)},\ldots,r^{(n)}\}$ ($n\geq2$) are exactly all the roots of $f$. Then there is a positive integer $m$ so that $(r/r^{(2)})^m=1$. Thus $\mathcal{R}_f$ is non-trivial and so is the lattice $\mathcal{R}_f^\bbbq$. This contradicts the assumption.\qed
\end{proof}

\begin{remark}
Theorem \ref{rfqtri} still holds when the equality $\mathcal{G}_f^B=\Sigma$ is replaced by $\mathcal{G}_f^\bbbq=\Sigma$ in the condition (i). The proof is almost the same. Moreover, from the ``If'' part of the proof we observe that, when restricted to polynomials $f$ with degree higher than one, the condition (i) in Theorem \ref{rfqtri} can be replaced by the statement ``$\mathcal{G}_f^B$ is transitive and the pair $(\mathcal{G}_f^B,\mathcal{H})$ is $\bbbq$-trivial for any root stabilizer $\mathcal{H}$'' or the statement ``$\mathcal{G}_f^\bbbq$ is transitive and the pair $(\mathcal{G}_f^\bbbq,\mathcal{H})$ is $\bbbq$-trivial for any root stabilizer $\mathcal{H}$''.
\end{remark}

The counterpart of Corollary \ref{4eq} in this case is given below:
\begin{corollary}\label{4eqagain}
Let $f$ be as in Theorem \ref{rfqtri} and $\mathcal{G}\in\{\mathcal{G}_f^B,\mathcal{G}_f^\bbbq\}$. If $\deg(f)\geq2$ and the conditions (ii)\,--\,(iii) in  Theorem \ref{rfqtri} hold, then the following conditions are equivalent to each other:

(i) $\mathcal{G}=\Sigma$;

(ii) $\mathcal{G}$ is doubly transitive;

(iii) $\mathcal{G}$ is doubly homogeneous;

(iv) $\mathcal{G}$ is transitive and the pair $(\mathcal{G},\mathcal{H})$ is $\bbbq$-trivial for any root stabilizer $\mathcal{H}$.
\end{corollary}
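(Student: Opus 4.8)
The plan is to mirror the proof of Corollary \ref{4eq}, replacing $\mathcal{G}_f$ by $\mathcal{G}\in\{\mathcal{G}_f^B,\mathcal{G}_f^\bbbq\}$ throughout, since the two situations are formally parallel. First I would dispose of the trivial implications: $(i)\Rightarrow(ii)$ and $(ii)\Rightarrow(iii)$ hold because $\Sigma$ is doubly transitive (hence doubly homogeneous) for $\deg(f)\geq 2$, and being doubly transitive trivially implies being doubly homogeneous. The implication $(iv)\Rightarrow(i)$ follows from the ``If'' part of the proof of Theorem \ref{rfqtri} together with the remark after it: when $\deg(f)\geq 2$, hypotheses (ii)--(iii) of Theorem \ref{rfqtri} hold, and if in addition $\mathcal{G}$ is transitive with $(\mathcal{G},\mathcal{H})$ being $\bbbq$-trivial for every root stabilizer $\mathcal{H}$, then (by the remark) this substitutes for condition (i) of Theorem \ref{rfqtri}, so $\mathcal{R}_f^\bbbq$ is trivial; but then the ``Only If'' direction of Theorem \ref{rfqtri} forces $\mathcal{G}_f^B=\Sigma$ (and likewise $\mathcal{G}_f^\bbbq=\Sigma$, as noted in the remark), which is condition (i) here.

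The substantive step is $(iii)\Rightarrow(iv)$, and here I would argue exactly as in Corollary \ref{4eq}. Split on the degree. If $\deg(f)=2$: by condition (ii) of Theorem \ref{rfqtri}, no root of $f$ is a root of rational, so in particular $f$ has an irrational root; combined with condition (iii), $f$ is irreducible over $\bbbq$, so the Galois group $G_f$ acts transitively on the two roots, and since $G_f\leq\mathcal{G}_f^B\leq\mathcal{G}_f^\bbbq$ (Proposition \ref{subg} together with $\mathcal{G}_f^B\leq\mathcal{G}_f^\bbbq$), the group $\mathcal{G}$ is transitive. If $\deg(f)\geq 3$, the transitivity of $\mathcal{G}$ follows from the fact that a doubly homogeneous permutation group of degree at least $3$ is transitive, i.e.\ \cite{dixon1996permutation} Theorem 9.4A (recalling $\mathcal{G}$ acts on the root set $\Omega$, which has $\deg(f)$ elements since $f$ has no multiple roots). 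In both cases $\mathcal{G}$ is transitive and doubly homogeneous, so the pair $(\mathcal{G},\mathcal{H})$ is doubly homogeneous for any root stabilizer $\mathcal{H}$; by Proposition \ref{2homoqtri} this pair is $\bbbq$-trivial, which is exactly condition (iv). This closes the cycle $(i)\Rightarrow(ii)\Rightarrow(iii)\Rightarrow(iv)\Rightarrow(i)$.

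The only point requiring a little care — and the place I would expect a referee to look — is the faithfulness hypothesis needed to invoke Proposition \ref{2homoqtri}: that proposition is stated for a pair $(\mathcal{G},\mathcal{H})$ on which $\mathcal{G}$ acts faithfully on $\mathcal{G}/\mathcal{H}$. Since $\mathcal{G}\leq\Sigma$ is already realized as a permutation group on $\Omega$ and $\mathcal{H}$ is the stabilizer of a point, the action of $\mathcal{G}$ on $\mathcal{G}/\mathcal{H}$ is permutation-isomorphic to its action on the $\mathcal{G}$-orbit of that point, which (by transitivity, just established) is all of $\Omega$; hence the action is faithful. I would insert one sentence to this effect. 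Everything else is a verbatim transcription of the argument for Corollary \ref{4eq}, with $\Sigma$, $\mathcal{R}_f^\bbbq$ and Theorem \ref{rfqtri} in place of $\Sigma$, $\mathcal{R}_f$ and Theorem \ref{rftri}.
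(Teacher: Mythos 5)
Your proof is correct and follows exactly the route the paper intends: the paper's own proof of this corollary is literally the one-line remark that it is ``similar to the one of Corollary~\ref{4eq}'', and your write-up is a faithful elaboration of that template, correctly using condition (ii) of Theorem~\ref{rfqtri} to reduce the $\deg(f)=2$ case to the irreducible-with-irrational-root subcase and adding a legitimate (and welcome) sentence on the faithfulness hypothesis of Proposition~\ref{2homoqtri}.
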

\begin{proof}
The proof is similar to the one of Corollary \ref{4eq}.\qed
\end{proof}
Theorem \ref{rftri} and \ref{rfqtri} characterize, for the first time, the the polynomial $f$ with a trivial exponent lattice $\mathcal{R}_f$ or $\mathcal{R}_f^\bbbq$ with the help of the concept of a Galois-like group. The conditions (ii)--(iii) in both theorems can be decided every efficiently (by \S\,5.1 of \cite{preprint} and \S\,2.2.1 of \cite{issac}). However, an efficient algorithm deciding whether a Galois-like group, of a given polynomial $f$, equals the symmetry group $\Sigma$ or not is not available at present.
\section{Conclusion}
We characterize the polynomials with trivial exponent lattices through the Galois and the Galois-like groups. Based on the algorithm {\tt IsQtrivial}, we extensively improve the main algorithm in \cite{zheng2019computing} proving triviality of the exponent lattice of a generic polynomial (when the polynomial degree is large). In addition, a sufficient and necessary condition is given with the help of the concept of a Galois-like group, which turns out to be essential in the study on  multiplicative relations between the roots of a polynomial. Further study on Galois-like groups seems to be interesting and promising.
\bibliographystyle{splncs04}
\bibliography{lncs444}
\end{document}